\newtheorem{theorem}{Theorem}
\newtheorem{proposition}{Proposition}
\newtheorem{lemma}{Lemma}
\theoremstyle{definition}
\newtheorem{definition}{Definition}
\DeclareMathOperator*{\opt}{opt}
\newcommand{\SOS}{\textit{\textsf{SOS1}}\xspace}
\newcommand{\rv}[1]{#1}
\definecolor{armygreen}{rgb}{0.19, 0.53, 0.43}
\definecolor{cornflowerblue}{RGB}{	100, 149, 237}
\begin{document}
 

\title{\textbf{A fresh view on Least Quantile of Squares Regression based on new optimization approaches}}

\author[1]{Justo Puerto\thanks{\href{mailto:puerto@us.es}{\texttt{puerto@us.es}}}}
\author[2]{Alberto Torrejon\thanks{\href{mailto:atorrejon@us.es}{\texttt{atorrejon@us.es}}}}

\affil[1,2]{Institute of Mathematics of the University of Seville, Seville, Spain.}
\affil[1,2]{Department of Statistics and Operational Research, University of Seville, Spain.}

\maketitle

\hrulefill

\begin{abstract}


Regression analysis is an important instrument to determine the effect of the explanatory variables on response variables. When outliers and bias errors are present, the standard weighted least squares estimator may perform poorly. For this reason, many alternative robust techniques have been studied in literature. In these terms, the Least Squares Quantile (LQS), and in particular the Least Squares Median, are among the regression estimators that exhibit better robustness properties. However, the accurate computation of this estimators is computationally demanding, resulting in a difficult estimator to obtain. In this paper, new novel approaches to compute a global optimal solution for the LQS estimator based on single-level and bilevel optimization methods are proposed. An extensive computational study is provided to support the efficiency of the methods considered, and an ad hoc procedure to address the scalability of the problem to larger instances is proposed.

\textbf{Keywords:} Robust statistics, least quantile regression, algorithms, mixed integer programming, bilevel optimization.

\end{abstract}

\hrulefill

\section{Introduction}
\label{sec:introduction}


A regression analysis models the relationship between one or more independent variables, also known as covariates or explanatory variables, and a dependent variable, also known as the response or target variable. 
Consider the following multiple linear regression model with response $y \in \mathbb{R}^n$, model matrix $\mathbf{X}_{n \times p}$, regression coefficients $\beta \in \mathbb{R}^p$ and error $\varepsilon \in \mathbb{R}^n$, namely
$y = \mathbf{X} \beta+\varepsilon.$

Classically, the error terms are assumed to be independent and identically distributed, $\varepsilon_i \sim N(0,\sigma^2), i \in \{1, \ldots, n\}$. 
Hereafter, it is assumed that $\mathbf{X}$ contains a column of ones to account for the intercept in the model.
Given data for the $i$-th sample $\left(y_i, \mathbf{x}_i\right), i \in \{1,\ldots,n\}$, and regression coefficients $\beta$, the $i$-th residual is denoted by $r_i(\beta) = y_i-\mathbf{x}_i^{\top} \beta$ for $i \in \{1, \ldots,n\}$. The traditional Least Squares $(\mathrm{LS})$ estimator is given by
\begin{equation}
\label{eq:betaLS}
\hat{\beta}^{(\mathrm{LS})} := \arg \min_{\beta} \sum_{i=1}^n r_i^2,
\end{equation}

which considers the minimization of the $\ell_2$-norm on the residuals.
Standard types of regression, such as $\mathrm{LS}$, have favorable properties if their underlying assumptions are true, but can give misleading results otherwise, i.e., are not robust to small deviations from the assumptions (such as non-constant variance, outliers or correlated data points). 
Robust regression methods provide an alternative to least squares regression by requiring less restrictive assumptions.

In this sense, $M$-estimators introduced by \cite{Huber1973} are obtained by minimizing a loss function of the residuals of the form $\sum_{i=1}^{n} \rho\left(r_{i}\right)$, where $\rho(r)$ is a symmetric function with a unique minimum at zero. 
Examples include the Huber function and the Tukey function (see \cite{Huber2011} for more details), among others.
In \cite{Koenker1978}, the Quantile Regression $(\mathrm{QR})$ estimator was introduced as a form of robust (outlier resistant) regression, see \cite{Portnoy1991}.
The $\tau$-th conditional quantile of a random variable $Y$ given $X$ is the $\tau$-th quantile of the conditional probability distribution of $Y$ given $X$, i.e, $$Q_{Y \mid X}(\tau)=\inf \left\{y: F_{Y \mid X}(y) \geq \tau\right\}$$ which is a random variable. 
In quantile regression, it is assumed that the $\tau$-th conditional quantile is given as a linear function of the explanatory variables, $$Q_{Y|X} (\tau) =X \beta.$$

Therefore, given the sample $\left(y_i, \mathbf{x}_i\right), i \in \{1, \ldots, n\}$, the estimators of the regression coefficients can be obtained as
\begin{equation}
\label{eq:betaQR}
\hat{\beta}(\tau)^{(\mathrm{QR})} := \arg \min_{\beta} \sum_{i=1}^n \rho_\tau\left(r_i \right),
\end{equation}

where, $\rho_{\tau}$ is the quantile loss function of the residuals given by
\begin{equation}
\label{eq:quantilelossfuntion_residuals}
\rho_{\tau}(r(\beta))=|r(\beta)| \left[ \tau \mathbb{I}_{\{r(\beta) \geq 0\}} + (1-\tau) \mathbb{I}_{\{r(\beta)<0\}}\right]= \begin{cases}(\tau-1) r(\beta) & \text { if } r(\beta)<0, \\ \tau r(\beta) & \text { if } r(\beta) \geq 0,\end{cases}
\end{equation}

with $0<\tau<1$, and $\mathbb{I}_{\{A\}}$ is the usual indicator function of the condition $A$, which equals 1 if $A$ is true and 0 otherwise.
Different quantile regression estimators $\hat{\beta}(\tau)$ can be obtained for different values of $\tau$. 
The minimization of the quantile loss function of the residuals leads to the convex optimization problem (\ref{eq:betaQR}), which can be transformed into a linear program by splitting the residual into its positive and negative parts, as it will be discussed in Section \ref{sec:bilevel}.
For this reason, quantile regression has many expert and intelligence systems applications, e.g., \cite{Benoit2009, Zheng2012, Reddy2013, Xu2015, Xu2017, Hesamian2019, Lyocsa2021, Wang2023, Sousa2024}.
Note that when $\tau=0.5$, the loss function $\rho_r$ is proportional to the absolute value function, and thus median regression is the same as linear regression by the Least Absolute Deviation $(\mathrm{LAD})$ estimator, introduced by \cite{Edgeworth1887}, which is given by
\begin{equation}
\label{eq:betaLAD}
\hat{\beta}^{(\mathrm{LAD})} := \arg \min _{\beta} \sum_{i=1}^n |r_i|,
\end{equation}

considering the minimization of the $\ell_1$-norm of the residuals.


Quantile regression, and hence the $\mathrm{LAD}$ estimator, like other regression $M$-estimators, can correct robustness problems due to vertical deviations (that is, related to the response variable), but not those caused by horizontal deviations (that is, related to the explanatory variables), i.e., can be highly sensitive to outliers in the covariate space, see \citet{He1990}. 
In order to control both factors, there exist many attempts in the literature that try to robustify the $\mathrm{QR}$ estimator, e.g., \citet{Adrover2004} or the Least Trimmed Quantile Regression estimator proposed by \cite{Neykov2012}. 
Beyond the above, there is a whole range of robust estimators studied in the literature whose purpose is to resist up deviant data values without becoming extremely biased.
One of the most studied robust estimators is the Least Median of Squares $\mathrm{(LMS)}$ regression estimator, studied in \citet{Hampel1975} and \citet{Rousseeuw1984}, which minimizes the median of the squared residuals, although for further purposes we similarly express it by means of the absolute residuals, i.e.,
\begin{equation}
\label{eq:betaLMS}
\hat{\beta}^{(\mathrm{LMS})} := \arg \min_{\beta} \left( \underset{i=1, \ldots, n}{\operatorname{median}} \left|r_{i}\right|\right),
\end{equation}

or its generalization by considering the $q$-th order statistic, the Least Quantile of Squares $(\mathrm{LQS})$ regression estimator, 
\begin{equation}
\label{eq:betaLQS}
\hat{\beta}^{(\mathrm{LQS})} := \arg \min _{\beta}\left|r_{(q)}\right|,
\end{equation}

where $r_{(q)}$ denotes the residual, corresponding to the $q$-th ordered absolute residual,
\begin{equation}
\label{eq:sorted_residuals}
\left|r_{(1)}\right| \leq\left|r_{(2)}\right| \leq \cdots \leq\left|r_{(n)}\right|.  
\end{equation}

Other estimators that achieve a high breakdown point and good statistical efficiency include the Least Trimmed Squares estimator (\citealp{Rousseeuw1984, Rousseeuw1987}), which minimizes the sum of squares of the $q$ smallest squared residuals, the Least Quartile Difference ($\mathrm{LQD}$) estimator, introduced by \cite{Croux1994}, and the Least Trimmed Differences (LTS) estimator, introduced by \cite{Stromberg2000}.


The majority of the regression estimators listed above belong to larger families of estimators whose descriptive, asymptotic, and behavioral properties have been well studied in the literature. 
One of particular interest to us are the $L$-estimators (\citealp{Jaeckel1972, Bickel1973}), which are linear combination of ordered statistics, see \cite{Huber2011} or \cite{Serfling2009} for further details.
In order to formally define this structure for regression purposes, let us consider $r_{ord} = ( |r_{(1)}|, |r_{(2)}|, \ldots, |r_{(n)}|)^t$ the vector with the components sorted in non-decreasing sequence, 

\begin{equation}
\label{eq:Lestimator}
\hat{\beta}^{(L)} := \arg \min_{\beta} \ \lambda^t r_{ord},
\end{equation}

where $\lambda \in \mathbb{R}^n$ is a given rank dependant vector which does not necessarily have to be positive.
Different definitions of the $\lambda$-vector also lead to different objective function based on the absolute residuals.
Table \ref{table:robust_lambdas} shows the ordered weighted formulation of different well-known robust regression estimators. 
Although the expressions of the vectors indicated in the table are straightforward, for the last two, the readers are referred to, e.g., \citet{Downton1966} or \citet{Mesa2003}.
In any case, the ordered structure presented in (\ref{eq:Lestimator}) allows great flexibility, not just to study a large number of robust estimators, but to extend the analysis to more general estimators. 
For instance,
through an appropriate $\lambda$-vector, various measures of fairness or equity and their adequacy can also be studied, see, e.g. \cite{Marsh1994} or \citet{Ogryczak2009}.
Furthermore, if quadratic terms are needed to meet the definition of the estimators, as in the case of the LS estimator 
(\ref{eq:betaLS}), one can consider extending (\ref{eq:Lestimator}) by including an additional quadratic expression of the form $r_{ord}^t M r_{ord}$, where $M \in \mathbb{R}^{n \times n}$ is a rank dependant matrix which entries are the weighting factor of the interaction between residuals. For example, by considering $M = diag_n(1,...,1)$ we meet the exact definition of (\ref{eq:betaLS}).

\begin{table}[]
\centering
\renewcommand{\arraystretch}{1}
\begin{center}
\begin{tabular}{ c c c l }
\hline
Estimator & Loss & Expression & $\lambda$-vector \\
\hline
  \small{\begin{tabular}[c]{@{}c@{}} \textbf{Least Absolute} \\ \textbf{Deviation} \\ (\citealp{Edgeworth1887}) \end{tabular}} &
  $\ell_1$-norm &
  \footnotesize $\displaystyle{\sum_{i=1}^n |r_i|}$ & 
  $(1,...,1)$ 
\\ \hline
  \small{\begin{tabular}[c]{@{}c@{}} \textbf{Minimum Maximum} \\ \textbf{Absolute Deviation} \\ (\citealp{Sposito1976}) \end{tabular}} & 
  \small{\begin{tabular}[c]{@{}c@{}} Chebyshev or \\ $\ell_\infty$-norm  \end{tabular}} &
  \footnotesize $\displaystyle{\max |r_i|}$ & 
  $(0,...,0,1)$ 
\\ \hline
  \small{\begin{tabular}[c]{@{}c@{}} \textbf{Least Median} \\ \textbf{of Squares} \\ (\citealp{Hampel1975}) \end{tabular}} &
  \small{\begin{tabular}[c]{@{}c@{}} Median \\ absolute residual \end{tabular}} &
  \footnotesize $\displaystyle{\text{median} |r_i|}$ & 
  $(0, ..., 0, \underbrace{1}_{\lfloor \frac{n}{2} \rfloor-th}, 0, ..., 0)$ 
\\ \hline
  \small{\begin{tabular}[c]{@{}c@{}} \textbf{Least Quantile} \\ \textbf{of Squares} \\ (\citealp{Rousseeuw1984})  \end{tabular}} &
  \small{\begin{tabular}[c]{@{}c@{}} $q$-th \\ absolute residual \end{tabular}} &
  \footnotesize $\displaystyle{|r_{(q)}|}$ & 
  $(0, ..., 0, \underbrace{1}_{q-th}, 0, ..., 0)$ 
\\ \hline
  \small{\begin{tabular}[c]{@{}c@{}} \textbf{Least Trimmed} \\ \textbf{of Squares} \\ (\citealp{Rousseeuw1985}) \end{tabular}} & 
  \small{\begin{tabular}[c]{@{}c@{}} $\alpha, \beta$-trimmed mean \\ ($\alpha, \beta \in [0,1]$) \end{tabular}} &
  \footnotesize $\displaystyle{\sum_{i= \lfloor \alpha n \rfloor + 1}^{n - \lfloor \beta n \rfloor} |r_{(i)} |}$ & 
  $(\underbrace{0,...,0}_{\lfloor \alpha n \rfloor}, \underbrace{1,...,1,}_{n - (\lfloor \alpha n \rfloor + \lfloor \beta n \rfloor)} \underbrace{0,...,0}_{\lfloor \beta n \rfloor})$  
\\ \hline
  \small{\begin{tabular}[c]{@{}c@{}} \textbf{Gini} \\ \textbf{Differences} \\ (\citealp{Wainer1976}, \\ \citealp{Sievers1983}) \end{tabular}} &
  \small{\begin{tabular}[c]{@{}c@{}}  Gini's mean or \\ absolute differences \end{tabular}} &
  \footnotesize $\displaystyle{  \sum_{i,j=1}^n \left|r_i-r_j\right| }$  &  
  \small{\begin{tabular}[c]{@{}c@{}} $(2(2i-n-1))_{i \in \{1,...,n\}}$  \end{tabular}}    
\\ \hline
  \small{\begin{tabular}[c]{@{}c@{}} \textbf{Least Trimmed} \\ \textbf{Differences} \\ (\citealp{Stromberg2000})  \end{tabular}} &
  \small{\begin{tabular}[c]{@{}c@{}}  $\alpha$-trimmed absolute \\ differences \\ ($\alpha \in [0,1]$) \end{tabular}} &
  \footnotesize $\displaystyle{  \sum_{i,j=1}^{\lfloor \alpha n \rfloor} \left|r_i-r_j\right| }$  &  
  $(2(2i-\lfloor \alpha n \rfloor-1)_{i \in \{1,...,\lfloor \alpha n \rfloor\}}, 0,...,0)$  
\\ \hline
\end{tabular}
\caption{Equivalent $\lambda$-vectors to express different robust estimators.}
\label{table:robust_lambdas}
\end{center}
\end{table}

Expressions in the form of (\ref{eq:Lestimator}) can be studied and generalized by means of the well-known ordered weighted average (OWA) aggregation operators, which were introduced in \cite{Yager1988}. These operators have been widely studied in the context of linear regression problems since the introduction of OWA regression by \cite{Yager2009}. To name a few works in this direction, one can find \citet{Chachi2021}, \citet{Flores2020,  Flores2022} or \cite{durso2022}.
In \cite{Blanco2018, Blanco2021}, 
the authors propose a framework to solve problems in the form of (\ref{eq:Lestimator}) based on ordered optimization models, where residuals are defined in a more general way, based on the use of the ordered median function, see \cite{Nickel2005, Puerto2019}, or the recent work in \cite{Ljubic2024}. 
Moreover, this framework guarantees that any feasible solution provides a sub-optimal estimator (in those cases where optimality can not be proven) and it can also be used to compute estimators with extra side constraints. Specifically, this leads to,  
\begin{equation}
\label{eq:Lestimator_general}
\hat{\beta} :=  \arg \min_{\beta} \ \lambda^t r_{ord}, \quad \text{subject to: } \mathbf{A} \beta \leq \mathbf{b}
\end{equation}

where $\mathbf{A}_{m \times p}, \mathbf{b}_{m \times 1}$ are given parameters in the problem representing side constraints useful to accommodate many forms on regularization on the variable $\beta$.
The counterpart to the great generality of this model, rests on the nonlinearity of the ordering operation that introduces a extra degree of complication and, therefore,
solving these models can be computationally challenging.
For this reason, tailored approaches for each particular case are needed which, for the $\mathrm{LQS}$ estimator, motivates our research in this paper.

\subsection{State-of-the-art results for Least Quantile of Squares Regression}

The $\mathrm{LQS}$, hence the $\mathrm{LMS}$, estimator is known to be computationally demanding due to the combinatorial nature of the problem, proven to be $NP$-hard by \citet{Bernholt2006}.
As a consequence, 
many different approaches in the literature to compute these estimators are not able to obtain a global minimum of the problem for large sample size instances, nor to provide certificates about the quality of the solution obtained.
In this line, one can fin the works of 
\cite{Rousseeuw1997},
\cite{Steele1986},
\cite{Stromberg1993},
\cite{Watson1998},
\cite{Agullo1997},
\cite{Giloni2002}, 
and \cite{Mount2006},
among others.

To the best of our knowledge, it is in the work of \cite{BertsimasMazumder2014} where, in addition to obtaining state-of-the-art results for medium/large instance sizes, a first-time resolution method is proposed that allows finding a provable global optimal solution for the $\mathrm{LQS}$ problem by means of robust optimization techniques; see \cite{Bertsimas2011} for further references. 
In that paper, $\mathrm{LQS}$ estimators (henceforth $\mathrm{BM}$ estimators), are obtained as solutions of a mixed integer mathematical programming model which is reviewed next for the sake of completeness. 
To this end, to model the $q$-th sorted residual, that is, $\left|r_{(q)}\right|$, the authors express the fact that $r_i \leq\left|r_{(q)}\right|$ for $q$ many absolute residuals from $\left|r_i\right|, \ldots,\left|r_n\right|$
by introducing the binary variables 
$$z_i= \begin{cases}1, & \text { if }\left|r_i\right| \leq\left|r_{(q)}\right| . \\ 0, & \text { otherwise. }\end{cases}$$

In addition, two sets of auxiliary continuous variables, $\mu_i$, $\bar{\mu}_i \geq 0$, are required such that they satisfy
$$\left|r_i\right|-\mu_i \leq\left|r_{(q)}\right| \leq\left|r_i\right|+\bar{\mu}_i, \quad i=1, \ldots, n,$$

together with the conditions that
if $\left|r_i\right| \geq\left|r_{(q)}\right|$, then $\bar{\mu}_i=0, \mu_i \geq 0$ and
if $\left|r_i\right| \leq\left|r_{(q)}\right|$, then $\mu_i=0, \bar{\mu}_i \geq 0$. 
Finally, decision variables $r_{i}^{+}, r_{i}^{-} \geq 0$, $i \in \{1, \ldots, n\}$, are introduced for the linearization of the absolute value of the residuals, $|r_i|$, such that
\begingroup
\allowdisplaybreaks
\begin{equation}
\label{eq:residuals_linearization}
\begin{aligned}
& r_{i}^{+}+r_{i}^{-} = |r_{i}|, \\
& r_{i}^{+}-r_{i}^{-} = r_i = y_{i}-\mathbf{x}_{i}^{\top} \beta, \\
& r_{i}^{+}, r_{i}^{-} \geq 0, r_{i}^{+} r_{i}^{-}=0, i \in \{1, \ldots, n \}.
\end{aligned}
\end{equation}
\endgroup

Therefore, $\mathrm{BM}$ estimators are obtained by minimizing a variable $\gamma \in \mathbb{R}$, which represents $q$-th sorted absolute residual, by means of the following mixed integer model,
\begingroup
\allowdisplaybreaks
\begin{subequations}
\label{form:BM}
\begin{align}
    \widehat{\beta}_{BM}^{(\mathrm{LQS})} := & \arg \min_{\beta} \quad \gamma,   &  \label{cons:BM_0} \\ \text{subject to: }
    & r_{i}^{+}-r_{i}^{-}=y_{i}-\mathbf{x}_{i}^{\top} \beta,   & i \in \{1, \ldots, n\},  \label{cons:BM_1} \\
    & r_{i}^{+}+r_{i}^{-}-\gamma=\bar{\mu}_{i}-\mu_{i},        & i \in \{1, \ldots, n\},  \label{cons:BM_2} \\
    & \sum_{i=1}^{n} z_{i}=q,                                  &                          \label{cons:BM_3} \\    
    & \gamma \geq \mu_{i},                                     & i \in \{1, \ldots, n\},  \label{cons:BM_4} \\
    & \mu_{i}, \bar{\mu}_{i} \geq 0,                           & i \in \{1, \ldots, n\},  \label{cons:BM_5} \\
    & r_{i}^{+}, r_{i}^{-} \geq 0,                             & i \in \{1, \ldots, n\},  \label{cons:BM_6} \\
    & (\bar{\mu}_{i}, \mu_{i}) \ : \SOS,                & i \in \{1, \ldots, n\},  \label{cons:BM_7} \\
    & (r_{i}^{+}, r_{i}^{-}): \SOS,                     & i \in \{1, \ldots, n\},  \label{cons:BM_8} \\
    & (z_{i}, \mu_{i}) \ \ : \SOS,                      & i \in \{1, \ldots, n\},  \label{cons:BM_9} \\
    & z_{i} \in\{0,1\},                                        & i \in \{1, \ldots, n\},  \label{cons:BM_10} 
\end{align}
\end{subequations}
\endgroup

where, \SOS state for the Special Ordered Sets of type 1, first introduced by \cite{Beale1970}, which are a special set of variables for which at most one of which can take a non-zero value and all others being at 0.

This approach leads to an optimal solution for any dataset where the sample data points do not necessarily have to be in general position, that is, for any subset of $\mathcal{I} \subset\{1, \ldots, n\}$ with $|\mathcal{I}|=p$, the $p \times p$ submatrix $X_{\mathcal{I}}$ has rank $p$. 
Furthermore, this framework enables us to provide a proof of the breakdown point of the LQS objective value that holds for any dataset.
The authors also propose a regularized version of the mixed-integer optimization formulation, 
which they found to be effective in accelerating the convergence of the model without any loss of accuracy in the solution, 
that may also act as a regularizer to shrink coefficients.

\textit{Our contribution.} 

Our contributions in this paper can be summarized as follows.
Due to the complexity of the LQS problem, obtaining a global optimal solution of (\ref{eq:betaLQS}) is quite demanding, mainly since it is a non-linear and non-differentiable problem.
In this paper we propose a number of novel mathematical programming formulations that improve the efficiency of the computation of the global optimal solutions for the LQS estimator.
These new formulations consider a smaller number of variables and constraints, allowing us to provide better upper bounds of the objective value in less computing time.
First, single-level formulations based on ordered optimization methods are proposed, yielding evidence of their equivalences and exploiting their properties to derive valid inequalities and bounds if needed.
Second, exploiting the properties of the quantile loss function (\ref{eq:quantilelossfuntion_residuals}), a new result is derived that allows us to propose a compact formulation with an additional constraint by resorting to the resolution of a bilevel optimization problem.
Given the equivalences between all these formulations and (\ref{form:BM}), general and robustness properties regarding the LQS estimator, such as the high breakdown point, the validity of non-general position of the data samples or the sub-optimality of the solution when optimality cannot be certified, are the same than in \cite{BertsimasMazumder2014}.
These results are supported by an extensive and detailed computational study on instances of tractable size. 
In addition, to address the scalability of the problem, i.e., the resolution of larger size instances (up to 10000 points), a novel  aggregation approach is proposed.

\textit{Structure of the paper.} 

The remainder of the paper is organized as follows. 
In Section \ref{sec:singlelevel}, novel single-level approaches for the LQS problem are described. 
Section \ref{sec:bilevel} develops the proposed bilevel optimization approaches. 
Section \ref{sec:Computational_experience} reports the computational experiments supporting our results, 
and Section \ref{sec:scale} shows the scalability of the proposed aggregation method to larger instances.
The last section contains the final remarks and conclusions.

\section{Single-level optimization approaches to Least Quantile of Squares regression}
\label{sec:singlelevel}

For understanding the remaining mathematical programming formulations, in general terms, a single-level optimization problem consists of
\begin{equation}
\notag
\label{def:single_level}
\begin{aligned}
\opt_{x \in X} & f(x) \\ \text { subject to: } 
& g(x) \geq 0, \\
& h(x) = 0,
\end{aligned}
\end{equation}

where, $\opt$ states the optimization sense of the problem, minimization or maximization of the given objective function $f: \mathbb{R}^{n} \rightarrow \mathbb{R}$, concerning the decision variables $x \in X \subseteq \mathbb{R}^n$, and possibly subject to a set of inequality constraints represented by $g: \mathbb{R}^{n} \rightarrow \mathbb{R}^{d_1}$ and/or to a set of equality constraints $h: \mathbb{R}^{n} \rightarrow \mathbb{R}^{d_2}$, where $d_1$ and $d_2$ represent the number of each type of constraints, respectively.

In this section, three alternative single-level mixed-integer optimization formulations to solve problem (\ref{eq:betaLQS}) are presented.
First, two formulations based on the ordered median function (\citealp{Nickel2005}), widely used in the Location Science context (see \citealp{Laporte2019}), are derived.
Second, extending the work by \cite{BertsimasMazumder2014}, a compact formulation based in the $k$-sum operator decomposition of the $q$-th ordered absolute residual (\citealp{Kalcsics2002, Ogryczak2003, Marin2020}) is introduced.

\subsection{Sorting constraints based formulation}
\label{subsec:sorting_based}

In the current work, given the sequence of ordered absolute residuals in (\ref{eq:sorted_residuals}), 
our modeling purposes call for the minimization of the $q$-th element of this sequence to find $\mathrm{LQS}$ estimators for problem (\ref{eq:betaLQS}).
This objective function can be formulated by means of the ordered median function via the $\lambda$-vector corresponding to the $k$-max criterion, i.e., $\lambda=(0, \ldots, 0, \underbrace{1}_{k-th}, 0, \ldots, 0)$, and stating the equivalence
\begin{equation}
\label{eq:beta_kmax_criterion}
\left|r_{(q)}\right| = \sum_{i = 1}^n \lambda_{i} | r_{(i)} | = \sum_{i = 1}^n z_{iq} |r_i|,
\end{equation}

where, the following binary decision variables are defined
$$z_{ik}= \begin{cases}1, & \text { if the $i$-th residual is in the $k$-th position of the ordered vector}, \\ 0, & \text { otherwise. }\end{cases}$$

Therefore, introducing sorting constraints for these assignment variables following the rationale in \cite{Boland2006}, a solution to the $\mathrm{LQS}$ problem is obtained as follows
\begingroup
\allowdisplaybreaks
\begin{subequations}
\label{form:SL_1}
\begin{align}
\hat{\beta}^{(\mathrm{LQS})} := & \arg \min_{\beta} \quad \sum_{i=1}^n z_{iq} |r_{i}|, & \label{cons:SL_1_0} \\ \text{subject to: }
& \sum_{i=1}^{n} z_{ik}=1,                                           & k \in \{1,\ldots,n\},      \label{cons:SL_1_1} \\
& \sum_{k=1}^{n} z_{ik}=1,                                           & i \in \{1,\ldots,n\},      \label{cons:SL_1_2} \\
& \sum_{i=1}^{n} z_{ik}|r_{i}| \leq \sum_{i=1}^{n} z_{i,k+1}|r_{i}|, & k \in \{1,\ldots,n\}, k<n, \label{cons:SL_1_3} \\
& z_{ik} \in\{0,1\},                                                 & i,k \in \{1,\ldots,n\}.      \label{cons:SL_1_4} 
\end{align}
\end{subequations}
\endgroup

Constraints (\ref{cons:SL_1_1}), (\ref{cons:SL_1_2}) and (\ref{cons:SL_1_3}) model the sorting of the absolute residuals. 
In particular, 
constraints (\ref{cons:SL_1_1}) guarantee that every order position is assigned to a residual, 
constraints (\ref{cons:SL_1_2}) ensure that every residual is assigned to a sorting position and 
constraints (\ref{cons:SL_1_3}) sort the absolute value of the residuals in non-decreasing order in the objective function.
Note that formulation (\ref{form:SL_1}) is nonlinear due to the product of a binary and a non-negative variables, in both the objective function (\ref{cons:SL_1_0}) and in the sorting constraints (\ref{cons:SL_1_3}), 
and also to the absolute value of the residuals.
The first nonlinearity can be easily linearized by introducing big-M notation via \cite{Mccormick1976} envelopes, and the latter by considering the equivalences in (\ref{eq:residuals_linearization}).

A first simplification of model (\ref{form:SL_1}) can be obtained stating that, constraints (\ref{cons:SL_1_3}) provide the sorting of the absolute residual values including them as coefficients in the right and left-hand side of the inequalities, thus weakening the formulation by introducing more nonlinear terms. 
An a priori stronger formulation than (\ref{form:SL_1}) can be obtained by considering the equivalent constraints in Lemma \ref{lem:SL_1_nocoef}.

\begin{lemma}
\label{lem:SL_1_nocoef}
The set of constraints 
\begin{equation}
\label{eq:SL_1_nocoef}
    \sum_{k > \ell} z_{jk} + \sum_{k \leq \ell} z_{ik} \leq 1, \quad i,j,\ell \in \{1, \ldots, n\},
\end{equation}

provide an alternative sorting of the absolute value of the residuals, thus being equivalent to constraints (\ref{cons:SL_1_3}).
\end{lemma}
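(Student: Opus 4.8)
The statement is a purely combinatorial equivalence between two constraint systems, so the plan is to prove it by showing that, for every fixed value of the residual vector $(|r_1|,\ldots,|r_n|)$, each of the two systems selects exactly the permutation matrices that place the absolute residuals in non-decreasing order. Throughout, both systems are imposed on top of the assignment constraints (\ref{cons:SL_1_1}) and (\ref{cons:SL_1_2}) and the integrality (\ref{cons:SL_1_4}), so the feasible $z$'s are exactly the $n\times n$ permutation matrices; for such a $z$ I would let $\sigma$ denote the permutation with $z_{i\sigma(i)}=1$, and record the elementary identities $\sum_{k>\ell}z_{jk}=\mathbb{I}_{\{\sigma(j)>\ell\}}$ and $\sum_{k\le\ell}z_{ik}=\mathbb{I}_{\{\sigma(i)\le\ell\}}$, which turn everything into statements about $\sigma$.

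First I would decode (\ref{cons:SL_1_3}). Since $\sum_i z_{ik}|r_i|=|r_{\sigma^{-1}(k)}|$ is the absolute residual sitting in position $k$, the chain (\ref{cons:SL_1_3}) over $k=1,\ldots,n-1$ is equivalent to $|r_{\sigma^{-1}(1)}|\le\cdots\le|r_{\sigma^{-1}(n)}|$, and by comparing the two residuals occupying consecutive positions one sees this is in turn equivalent to the pairwise condition: for all $i,j$, $|r_i|>|r_j|$ implies $\sigma(i)>\sigma(j)$.

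Next I would decode (\ref{eq:SL_1_nocoef}). For a fixed ordered pair $(i,j)$, the inequality fails at some $\ell$ precisely when there is an $\ell$ with $\sigma(i)\le\ell<\sigma(j)$, i.e.\ precisely when $\sigma(i)<\sigma(j)$; hence imposing it for that pair and all $\ell\in\{1,\ldots,n\}$ is equivalent to $\sigma(i)\ge\sigma(j)$, which for a permutation and $i\neq j$ means $\sigma(i)>\sigma(j)$. Therefore the family (\ref{eq:SL_1_nocoef}), read over the ordered pairs with $|r_i|>|r_j|$ (the only pairs for which it is consistent with a sorting permutation), forces exactly $\sigma(i)>\sigma(j)$ for those pairs — which by the previous step is the same requirement as (\ref{cons:SL_1_3}). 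Both implications then follow at once, so the integer-feasible sets of the two formulations coincide and the $\arg\min$ in (\ref{form:SL_1}) is unchanged when (\ref{cons:SL_1_3}) is replaced by (\ref{eq:SL_1_nocoef}).

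The step I expect to need the most care is the bookkeeping around ties: when $|r_i|=|r_j|$ neither order is forced, so the pair index set in (\ref{eq:SL_1_nocoef}) must use the strict comparison $|r_i|>|r_j|$ (for pairs of equal absolute residuals the inequality is vacuous), exactly mirroring the fact that (\ref{cons:SL_1_3}) also leaves the internal order of equal residuals free; getting this right, together with the clean collapse of the ``for all $\ell$'' family into the single relation $\sigma(i)\ge\sigma(j)$, is the crux of the argument, while everything else is the routine dictionary between permutation matrices and the permutation $\sigma$. The accompanying claim that the new formulation is ``a priori stronger'' is then a separate observation, not needed for the equivalence: (\ref{eq:SL_1_nocoef}) no longer carries the continuous residual values as coefficients, so its linear relaxation does not inherit the weakening caused by the McCormick-linearized bilinear terms in (\ref{cons:SL_1_3}).
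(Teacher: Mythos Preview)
Your argument is correct and follows the same permutation-based route as the paper's proof: translate the assignment matrix $z$ into a permutation $\sigma$ and check that each constraint system is equivalent to the monotone-sorting condition $|r_i|>|r_j|\Rightarrow\sigma(i)>\sigma(j)$. You are in fact more careful than the paper on two points: you make explicit that imposing (\ref{eq:SL_1_nocoef}) for a fixed ordered pair $(i,j)$ over all $\ell$ collapses to the single relation $\sigma(i)\ge\sigma(j)$, and you correctly flag that the family must be restricted to pairs with $|r_i|>|r_j|$ (otherwise it is infeasible for any permutation with $n\ge 2$), a restriction the paper's statement leaves implicit and its short proof does not verify.
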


\begin{proof}

We need to show that constraints (\ref{eq:SL_1_nocoef}) sort the absolute value of the residuals in non-decreasing order. 

To this end, if $|r_i| < |r_j|$ for two $i,j \in \{1, \ldots, n\}$, and assuming $z_{j \ell} = 1$, this is, the absolute residual $j$ is in the $\ell$-th position, we have as a consequence that $$\sum_{k \leq \ell} z_{ik}= 1 \text{ and } \sum_{k > \ell} z_{jk} = 0.$$

which is equivalent to
$$ \sum_{k > \ell} z_{jk} + \sum_{k \leq \ell} z_{ik} \leq 1,$$
for every $i,j,\ell \in \{1, \ldots, n\}$.

Conversely, the constraints in \eqref{eq:SL_1_nocoef} which are binding  determine the correct sorting of the absolute residuals, by pairwise comparison.
\end{proof}

Constraints (\ref{eq:SL_1_nocoef}) are simpler than constraints (\ref{cons:SL_1_3}) in the sense that they do not include the absolute value of the residual terms as coefficients, thus removing many nonlinear terms. 
However, in counterpart, the number of these constraints is greater than the number of constraints in (\ref{cons:SL_1_3}). 
In spite of that, these constraints can be dynamically added in a branch-and-cut procedure by means of a tailored separation algorithm, see \cite{Labbe2017}.

The simplicity of the model (\ref{form:SL_1}) is clearly shadowed because it considers $O(n^2)$ variables.
However, for our purposes it is not necessary to sort all the residuals to obtain a valid formulation. 
Since we know beforehand that we need the absolute residual in the $q$-th position, it is sufficient to consider a set of binary variables which identifies the absolute residuals such that $|r_i| \leq |r_{(q)}|$, namely $z_i \in \{0,1\}, i \in \{1,...,n\}$, and another set of binary variables, namely $\delta_i \in \{0,1\}, i \in \{1,...,n\}$, which identifies the absolute residual value in the $q$-th position, $|r_i| = |r_{(q)}|$.
Therefore, following a similar rationale that in (\ref{form:SL_1}) we can have a much simpler model by
\begingroup
\allowdisplaybreaks
\begin{subequations}
\label{form:SL_1_bis}
\begin{align}
\hat{\beta}_1^{(\mathrm{LQS})} := & \arg \min_{\beta} \quad \sum_{i=1}^n \delta_i s_i, & \label{cons:SL_1_0_bis}\\ \text{subject to: }
    & r_{i}^{+}-r_{i}^{-}=y_{i}-\mathbf{x}_{i}^{\top} \beta, & i \in \{1, \ldots, n\}, \label{cons:SL_1_2_bis} \\
    & r_{i}^{+}+r_{i}^{-}=s_i,                               & i \in \{1, \ldots, n\}, \label{cons:SL_1_3_bis} \\
    & \sum_{i=1}^{n} \delta_i=1,                             &                         \label{cons:SL_1_4_bis} \\
    & \sum_{i=1}^{n} z_i=q,                                  &                         \label{cons:SL_1_5_bis} \\
    & z_i s_i \leq \sum_{i=1}^{n} \delta_i s_i,              & i \in \{1, \ldots, n\}, \label{cons:SL_1_6_bis} \\
    & r_{i}^{+}, r_{i}^{-}, s_i \geq 0,                      & i \in \{1, \ldots, n\}, \label{cons:SL_1_7_bis} \\
    & (r_{i}^{+}, r_{i}^{-}): \SOS,                          & i \in \{1, \ldots, n\}, \label{cons:SL_1_8_bis} \\
    & \delta_i, z_i \in\{0,1\},                              & i \in \{1, \ldots, n\}. 
\end{align}
\end{subequations}
\endgroup

The objective function (\ref{cons:SL_1_0_bis}) looks for the minimization of the absolute residual in the $q$-th position.
Constraints (\ref{cons:SL_1_2_bis}) and (\ref{cons:SL_1_3_bis}) represent the residuals and the absolute value of the residuals, respectively.
Constraint (\ref{cons:SL_1_4_bis}) guarantees that only one absolute residual is selected in the $q$-th position and 
constraint (\ref{cons:SL_1_5_bis}) indicates that there must be exactly $q$ residuals which are less than or equal to the $q$-th residual.
Finally, constraints (\ref{cons:SL_1_6_bis}) ensure that there are at most $q$ residuals less than or equal to the $q$-th one.
Note that formulation (\ref{form:SL_1_bis}) is non-linear due to the product of binary and non-negative variables in both the objective function (\ref{cons:SL_1_0_bis}) and constraints (\ref{cons:SL_1_4_bis}). 
Therefore, a first efficient model to calculate the LQS estimator can be obtained by linearizing these products via McCormick linearization.

Besides, following a similar rationale that in \cite{Schnepper2019}, we can further simplify formulation (\ref{form:SL_1_bis}) by considering a continuous variable, namely $\omega \geq 0$, which will represent the $q$-th absolute residual value instead of representing it by means of binary variables. 
To this end, let us consider binary variables $z_i \in \{0,1\}$, which equal one if and only if the $i$-th absolute residual is less or equal than $\omega$, i.e., $s_i > \omega$, and 0 if the $i$-th absolute residual is $s_i \leq \omega$. Then, an LQS estimator which does not include variable products is obteined as follows,
\begingroup
\allowdisplaybreaks
\begin{subequations}
\label{form:SL_1_bis_lin}
\begin{align}
\hat{\beta}_2^{(\mathrm{LQS})} := & \arg \min_{\beta} \quad \omega, & \label{cons:SL_2_0_bis_lin} \\ \text{subject to: }
    & r_{i}^{+}-r_{i}^{-}=y_{i}-\mathbf{x}_{i}^{\top} \beta, & i \in \{1, \ldots, n\}, \label{cons:SL_2_1_bis_lin} \\
    & r_{i}^{+}+r_{i}^{-}=s_i,                               & i \in \{1, \ldots, n\}, \label{cons:SL_2_2_bis_lin} \\
    & \sum_{i=1}^{n} z_i \leq n-q,                           &                         \label{cons:SL_2_4_bis_lin} \\
    & s_i \leq \omega + M z_i,                               & i \in \{1, \ldots, n\}, \label{cons:SL_2_5_bis_lin} \\
    & r_{i}^{+}, r_{i}^{-}, s_i \geq 0,                      & i \in \{1, \ldots, n\}, \label{cons:SL_2_12_bis_lin} \\
    & \omega \geq 0,                                         & i \in \{1, \ldots, n\}, \\
    & z_i \in\{0,1\},                                        & i \in \{1, \ldots, n\},
\end{align}
\end{subequations}
\endgroup

where $M$ is a sufficiently large constant.
We know that $s_{(q)} \leq \omega$ if the number of residuals such $s_i \geq s_{(q)}$ are less o equal than $n-q$,
which is imposed by constraint (\ref{cons:SL_2_4_bis_lin}), while constraints (\ref{cons:SL_2_5_bis_lin}) guarantee that we are counting the absolute residuals which are less than or equal to $\omega$ since, if $z_i = 0$, then $s_i \leq \omega$, and if $z_i = 1$, then $s_i > \omega$ for a sufficiently large constant $M$.

We can derive valid bounds for the big-$M$ constant required in the formulation above. Let us assume that we are given upper and lower bounds, $\beta_j^{+}$ and $\beta_j^{-}$, for the $\beta$ coefficients in our model.  
Specifically, if $\beta_0$ denotes the solution obtained by a regression model, then for $\bar M=\eta \|\beta_0 \|_{\infty}$, for $\eta \in [1,2]$ (say), the solution found within a box of diameter $2\bar M$ centered at $\beta_0$ is valid for the optimal solution of the LQS problem. Then, for a given $\beta_0$ we can take $\bar M = \eta \| \beta_0 \|_{\infty}$ for $\eta \in[1,2]$. Hence, from the condition $\|\beta -\beta_0\|_{\infty} \le 2\bar M$, one can obtain:
$$ \beta_j^{-}:= - 2 \bar M - |\beta_{0_j} | \le \beta_j \le \beta_j^+ := 2 \bar M + |\beta_{0_j} |,\; \forall j=1,\ldots,n.$$

Next, recall that $s_i$ stands for the $i$-th absolute value residual. That is, $|y_i-\beta^tX_i|$. Let us denote by
$$ P_i=\{j:x_{ij}>0\} \mbox{ and } N_i=\{j:x_{ij}<0\}.$$
Therefore,
\begin{align*}
    s_i\le & |y_i-\sum_{j\in N} \beta_j x_{ij}|\\
    \le & |y_i|+|\sum_{j\in P_i} \beta_j x_{ij}| + |\sum_{j\in N_i} \beta_j x_{ij}|\\
    \le & |y_i|+|\sum_{j\in P_i} \beta_j^+ x_{ij}| + |\sum_{j\in N_i} \beta_j^- x_{ij}|.
\end{align*}

Hence, a valid upper bound for $M$ in the constraint \eqref{cons:SL_2_5_bis_lin} is:
$$ M_i= |y_i|+|\sum_{j\in P_i} \beta_j^+ x_{ij}| + |\sum_{j\in N_i} \beta_j^- x_{ij}|, \; \forall i=1,\ldots,n.$$

\subsection{$k$-Sum operator based formulation}
\label{subsec:ksums_based}

In addition to the results reviewed in Section \ref{sec:introduction}, the authors in \cite{BertsimasMazumder2014} provide a decomposition of the $q$-th ordered absolute residual based on the $k$-sum operator. 
To this end, let $x_{ord}=$ $\left(x_{(1)}, \ldots, x_{(n)}\right)$ be the vector with the components of $x \in \mathbb{R}^n$ sorted in non-decreasing order, i.e., $x_{(1)} \leq \cdots \leq x_{(n)}$. 
Then, if we denote by
\begin{equation}
\label{eq:ksums}
S_k(x)=\sum_{\ell=k}^n x_{(\ell)},  
\end{equation}

the sum of the $n-k+1$ largest values, also known as the $k$-sum operator, the $q$-th ordered absolute residual can we written as
\begin{equation}
\label{eq:LQS_to_ksums}
|r_{(q)}| = |y_{(q)}-\mathbf{x}_{(q)}^{\top} \beta| = \sum_{i=q}^n |y_{(i)}-\mathbf{x}_{(i)}^{\top} \beta| - \sum_{i=q+1}^n |y_{(i)}-\mathbf{x}_{(i)}^{\top} \beta| = S_q(r) - S_{q+1}(r).
\end{equation}

Computing (\ref{eq:ksums}) is equivalent to solve the following linear problem or its dual,
\begin{equation}
\label{form:ksum_primal.dual}
\begin{rcases}
\max & \sum_{i=1}^n d_i x_i,   \\ \text{subject to: } 
& \sum_{i=1}^n d_i=n-k+1,      \\
& 0 \leq d_i \leq 1, i \in \{1,\ldots,n\}.
\end{rcases}
=
\begin{cases}
\min & (n-k+1) t+\sum_{i=1}^n v_i,  \\ \text{subject to: } 
& t + v_i \geq x_i,  \ i \in \{1,\ldots,n\}, \\
& t\in \mathbb{R}, v_i \geq 0,     \ \, \quad i \in \{1,\ldots,n\}.
\end{cases}
\end{equation}

Note that, the primal problem in the left side is a continuous knapsack problem for which variables $d_i \in [0,1]$ can be considered discrete, i.e., $d_i \in \{0,1\}$. In addition, if $x_i\ge 0$ for all $i=1\ldots n$ then the variable $t$ can be considered nonnegative, namely $t\ge 0$.
Therefore, given (\ref{eq:LQS_to_ksums}) and these primal-dual relationships of the $k$-sum operator in (\ref{form:ksum_primal.dual}), 
$\mathrm{LQS}$ estimators are obtained as a result of the following model
\begingroup
\allowdisplaybreaks
\begin{subequations}
\label{form:single_level_3_bis}
\begin{align}
    \hat{\beta}_3^{(\mathrm{LQS})} := & \arg \min_{\beta} \quad (n-q+1)t + \sum_{i=1}^{n} v_{i} - \sum_{i=1}^{n} d_{i}s_i &  \label{cons:SL_3_0} \\ \text{subject to: }
    & r_{i}^{+}-r_{i}^{-}=y_{i}-\mathbf{x}_{i}^{\top} \beta, & i \in \{1, \ldots, n\}, \label{cons:SL_3_1_bis} \\
    & r_{i}^{+}+r_{i}^{-}=s_i,                               & i \in \{1, \ldots, n\}, \label{cons:SL_3_2_bis} \\
    & t + v_i \geq r_{i}^{+}+r_{i}^{-},                      & i \in \{1, \ldots, n\}, \label{cons:SL_3_3_bis} \\
    & \sum_{i=1}^{n} d_{i}= n-q,                             & i \in \{1, \ldots, n\}, \label{cons:SL_3_4_bis} \\
    & t, v_i \geq 0,                                         & i \in \{1, \ldots, n\}, \label{cons:SL_3_5_bis} \\
    & r_{i}^{+}, r_{i}^{-}, s_i \geq 0,                      & i \in \{1, \ldots, n\}, \label{cons:SL_3_6_bis} \\
    & (r_{i}^{+}, r_{i}^{-}): \SOS,                          & i \in \{1, \ldots, n\}, \label{cons:SL_3_7_bis} \\
    & d_{i} \in\{0,1\},                                      & i \in \{1, \ldots, n\}. \label{cons:SL_3_8_bis}
\end{align}
\end{subequations}
\endgroup

Note that this formulation is also nonlinear because of the terms $d_{i}s_i$ for $i \in \{1, \ldots, n\}$, which can be easily linearized by means of McCormick envelopes. 
Proposition \ref{th:SL_3} proves the validity of this formulation to compute the LQS estimator.
Furthermore, Proposition \ref{th:SL_valid} introduces a valid inequality to improve lower bounds for model (\ref{form:single_level_3_bis}).

\begin{proposition}
\label{th:SL_3}
Estimators $\hat{\beta}_3^{(\mathrm{LQS})}$, obtained solving (\ref{form:single_level_3_bis}), are solutions of the LQS problem (\ref{eq:betaLQS}).
\end{proposition}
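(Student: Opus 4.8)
The plan is to show that, for every fixed coefficient vector $\beta$, the inner optimization over the auxiliary variables in (\ref{form:single_level_3_bis}) has optimal value exactly $|r_{(q)}(\beta)|$, after which minimizing over $\beta$ turns the objective of (\ref{form:single_level_3_bis}) into $\min_\beta |r_{(q)}(\beta)|$, which is precisely the LQS problem (\ref{eq:betaLQS}). First I would record the standard fact that constraints (\ref{cons:SL_3_1_bis})--(\ref{cons:SL_3_2_bis}), the non-negativity (\ref{cons:SL_3_6_bis}) and the \SOS\ condition (\ref{cons:SL_3_7_bis}) together force $r_i^{+}r_i^{-}=0$ and hence $s_i=r_i^{+}+r_i^{-}=|y_i-\mathbf{x}_i^\top\beta|=|r_i|$ for all $i$, exactly as in the linearization (\ref{eq:residuals_linearization}); thus, once $\beta$ is fixed, the vector $s=(s_1,\dots,s_n)$ is the (non-negative) vector of absolute residuals.

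Next I would exploit the \emph{block structure} of the remaining problem. With $\beta$ (hence $s$) fixed, the variables $(t,v_i)$ occur only in the objective term $(n-q+1)t+\sum_i v_i$ and in constraints (\ref{cons:SL_3_3_bis}), (\ref{cons:SL_3_5_bis}), while the variables $d_i$ occur only in the objective term $-\sum_i d_i s_i$ and in constraints (\ref{cons:SL_3_4_bis}), (\ref{cons:SL_3_8_bis}); the two blocks share no constraint, so the inner optimum is the sum of two independent optima. For the $(t,v_i)$ block, $\min\{(n-q+1)t+\sum_i v_i:\ t+v_i\ge s_i,\ t\ge 0,\ v_i\ge 0\}$ is exactly the dual program on the right of (\ref{form:ksum_primal.dual}) with $k=q$, whose value is the $k$-sum $S_q(r)$ of (\ref{eq:ksums}); restricting $t\ge 0$ costs nothing because $s_i\ge 0$, as noted right after (\ref{form:ksum_primal.dual}). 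For the $d_i$ block, $\min\{-\sum_i d_i s_i:\ \sum_i d_i=n-q,\ d_i\in\{0,1\}\}=-\max\{\sum_i d_i s_i:\ \sum_i d_i=n-q,\ 0\le d_i\le 1\}=-S_{q+1}(r)$, using that the $0/1$ relaxation is exact (continuous knapsack) and that $n-(q+1)+1=n-q$, so this is the primal program on the left of (\ref{form:ksum_primal.dual}) with $k=q+1$. Adding the two blocks, the inner optimum equals $S_q(r)-S_{q+1}(r)$, which by the decomposition (\ref{eq:LQS_to_ksums}) is precisely $|r_{(q)}(\beta)|$.

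Finally I would take the minimum over $\beta$: the optimal value of (\ref{form:single_level_3_bis}) equals $\min_\beta |r_{(q)}(\beta)|$, and any $\beta^\star$ attaining it is, by definition (\ref{eq:betaLQS}), an LQS estimator, so $\widehat\beta_3^{(\mathrm{LQS})}=\beta^\star$ solves (\ref{eq:betaLQS}). I expect the only delicate points to be bookkeeping rather than conceptual: tracking which $k$-sum matches each block (namely $S_q$ through the dual side and $S_{q+1}$ through the primal side of (\ref{form:ksum_primal.dual})), handling the sign flip that turns the maximization of $\sum_i d_i s_i$ into the minimization appearing in the objective, and justifying that $s_i\ge 0$ legitimizes both the restriction $t\ge 0$ in (\ref{cons:SL_3_5_bis}) and the binary restriction $d_i\in\{0,1\}$ in (\ref{cons:SL_3_8_bis}). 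The separability of the $(t,v)$ and $d$ blocks, on which the whole argument rests, is immediate from inspecting the constraints.
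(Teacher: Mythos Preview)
Your proof is correct and follows essentially the same approach as the paper: both arguments rest on the separability of the $(t,v)$ block (whose minimum is $S_q(r)$ via the dual in (\ref{form:ksum_primal.dual})) and the $d$ block (whose minimum of $-\sum_i d_i s_i$ is $-S_{q+1}(r)$ via the primal), and then invoke the decomposition (\ref{eq:LQS_to_ksums}). Your write-up is in fact more careful than the paper's, explicitly justifying $s_i=|r_i|$, the harmlessness of $t\ge 0$ and $d_i\in\{0,1\}$, and the final minimization over $\beta$, whereas the paper compresses the whole argument into the single chain $\min_{t,v} f(t,v)-\max_d g(d)=\min_{t,v,d}(f(t,v)-g(d))=\min|r_{(q)}|$.
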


\begin{proof}
Let $f(t,v)$ be the objective function of the problem $S^{dual}_q(r)$ and $g(d)$ the objective function of the problem $S^{primal}_{q+1}(r)$ where 

\begin{minipage}{0.5\textwidth}
\begin{align}
 S^{dual}_q(r): \min \; & (n-q+1)t + \sum_{i=1}^{n} v_{i} & \notag  \\ \text{subject to: } 
    & t + v_i \geq |r_i|,  & i \in \{1, \ldots, n\},  \notag \\
    & t, v_i \geq 0,       & i \in \{1, \ldots, n\},  \notag 
\end{align}   
\end{minipage}
\begin{minipage}{0.5\textwidth}
\begin{align}
 S^{primal}_{q+1}(r): \max \quad & \sum_{i=1}^{n} d_{i}|r_i|  \notag \\ \text{subject to: } 
    & \sum_{i=1}^{n} d_{i}= n-q, &        \notag \\
    & d_{i} \in [0,1],           & i \in \{1, \ldots, n\}.  \notag 
\end{align}
\end{minipage}

As a consequence of equivalence (\ref{eq:LQS_to_ksums}), we have that
\begin{equation}
\begin{aligned}
    \min_{t,v} f(t,v) - \max_{d} g(d) = \min_{t,v} f(t,v) + \min_{d} (- g(d)) = \min_{t,v,d} (f(t,v)-g(d)) = \min |r_{(q)}|
\end{aligned} 
\end{equation}
\end{proof}

\begin{proposition}
\label{th:SL_valid}
The following inequality
\begin{equation}
(n-q)t + \sum_{i=1}^{n} v_{i} \geq \sum_{i=1}^{n} d_{i} s_i
\end{equation}

is valid for the model (\ref{form:single_level_3_bis}).
\end{proposition}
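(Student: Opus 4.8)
The plan is to read the inequality off directly from the constraints of (\ref{form:single_level_3_bis}), recognizing it as an instance of weak LP duality for the $k$-sum operator with $k=q+1$. First I would note that, by constraints (\ref{cons:SL_3_3_bis}), (\ref{cons:SL_3_5_bis}) and (\ref{cons:SL_3_6_bis}), any feasible point of (\ref{form:single_level_3_bis}) satisfies $t+v_i\ge s_i$ for all $i$ together with $t\ge 0$ and $v_i\ge 0$; that is, $(t,v)$ lies in the feasible region of the dual problem appearing in (\ref{form:ksum_primal.dual}). The key observation is that this feasible region is the \emph{same} for every value of $k$: only the objective coefficient of $t$ changes with $k$. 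Hence the $(t,v)$ of the model, which is dual-feasible for $S_q(r)$, is also dual-feasible for $S_{q+1}(r)$, whose dual objective is precisely $(n-q)t+\sum_{i=1}^n v_i$.

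Second, by constraint (\ref{cons:SL_3_4_bis}) and $d_i\in[0,1]$ (the continuous relaxation of (\ref{cons:SL_3_8_bis})), the vector $d$ is feasible for the primal in (\ref{form:ksum_primal.dual}) with $n-k+1=n-q$, i.e., for $S^{primal}_{q+1}(r)$, whose objective is $\sum_{i=1}^n d_i s_i$ (recall $s_i=r_i^++r_i^-=|r_i|$ by (\ref{cons:SL_3_2_bis}) and (\ref{cons:SL_3_7_bis})). Weak duality between $S^{primal}_{q+1}(r)$ and its dual then yields $\sum_{i=1}^n d_i s_i \le (n-q)t+\sum_{i=1}^n v_i$, which is exactly the asserted inequality; this is the same mechanism that underlies the validity argument of Proposition \ref{th:SL_3} via (\ref{eq:LQS_to_ksums}).

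If one prefers to avoid the duality vocabulary, the same conclusion follows by an elementary manipulation: multiply constraint (\ref{cons:SL_3_3_bis}) for index $i$ by $d_i\ge 0$ to obtain $d_i t + d_i v_i \ge d_i s_i$, bound $d_i v_i \le v_i$ using $0\le d_i\le 1$ and $v_i\ge 0$, so that $d_i t + v_i \ge d_i s_i$, and finally sum over $i\in\{1,\ldots,n\}$ invoking $\sum_{i=1}^n d_i = n-q$. I do not expect any genuine obstacle here; the only point deserving a line of care is that the argument uses $d_i\le 1$ but never the integrality of $d$, so the inequality holds on the entire LP relaxation and is therefore a legitimate strengthening cut for (\ref{form:single_level_3_bis}).
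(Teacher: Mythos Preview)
Your argument is correct and, in fact, cleaner than the paper's. The paper proves the inequality by first pinning down the explicit form of the optimal dual variables of $S_q^{primal}(r)$ via the Dantzig rule for continuous knapsack, namely $t^\ast=|r^\ast_{(q)}|$ and $v_i^\ast=(|r_i^\ast|-|r^\ast_{(q)}|)_+$, and then computing $(n-q)t^\ast+\sum_i v_i^\ast=S_{q+1}(r)$ to conclude the bound at the optimum. Your route sidesteps all of that: you simply observe that the dual feasible region in (\ref{form:ksum_primal.dual}) does not depend on $k$, so the $(t,v)$ of the model is automatically dual-feasible for $S_{q+1}$, while $d$ is primal-feasible for $S_{q+1}$ by (\ref{cons:SL_3_4_bis}); weak duality then gives the inequality at \emph{every} feasible point, not only at the optimum. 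The elementary version you give (multiply (\ref{cons:SL_3_3_bis}) by $d_i$, use $d_iv_i\le v_i$, sum) is the same weak-duality computation unpacked, and it makes the validity completely transparent.

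One small remark on your closing sentence: precisely because your argument shows the inequality is implied by the LP constraints $t+v_i\ge s_i$, $0\le d_i\le 1$, $\sum_i d_i=n-q$, $v_i\ge 0$, it cannot tighten the LP relaxation of (\ref{form:single_level_3_bis}) \emph{as written}. Its usefulness as a ``strengthening cut'' only materializes after the bilinear terms $d_i s_i$ are replaced by McCormick envelopes, at which point the original constraint structure is relaxed and the inequality is no longer redundant. So the validity proof is perfect; just be careful with the word ``strengthening'' in the last line.
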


\begin{proof}
In order to prove the validity of the inequality, we first explain the meaning of the optimal dual variables of problem $S^{primal}_{q}(r)$, namely $t^*, v_i^*$, $i \in \{1,...,n\}$. To this end, let $|r^*_{(i)}|$, $i \in \{1,...,n\}$, denote the i-$th$ optimal absolute residual values. 
It is well known from knapsack theory (\citealp{Martello1990}) that the following Dantzig algorithm produces  optimal primal  and dual solutions. Let $q$ be the index of the \emph{critical item} of the knapsack problem $S^{primal}_{q}(r)$, the optimal primal solution is
\begin{align}
\label{eq:radius:primal:sol}
d_i^{*} & = \begin{cases} 
1                      & \text{ if } i > q, \\ 
(n-q+1) - \sum_{i>q} 1 & \text{ if } i = q, \\ 
0                      & \text{ otherwise}.
\end{cases}
\end{align}

Therefore,
\begin{equation}
(n-q+1)t^* + \sum_{i=1}^n v_i^* = \sum_{i=1}^n |r_i^*| d_i^* = (n-q+1) |r^*(q)| +  \sum_{i>q} (|r_i^*|-|r^*_{(q)}|)
\end{equation}

and as a result, the optimal dual variables assume values,
\begin{equation}
\begin{aligned}
t^{*}               &= |r^*_{(q)}|, \\
v^{*}_{i}           &= \begin{cases}
|r_i^*|-|r^*_{(q)}| & \text { if } i > q, \\
0                   & \text{otherwise}.
\end{cases}
\end{aligned}
\end{equation}

Given this, it follows that, in the optimum,
\begin{equation}
\begin{aligned}
S_{q+1}(r) & = S_q(r) - |r_{(q)}| = (n-q+1)t + \sum_{i=1}^n v_i - |r_{(q)}| = \\
           & = (n-q)t + \sum_{i=1}^n v_i  \geq S_{q}(r) = \sum_{i=1}^{n} d_i |r_{i}| 
\end{aligned}
\end{equation}

which proves the validity of the inequality.
\end{proof}
 
\section{Bilevel optimization approaches to Least Quantile of Squares regression}
\label{sec:bilevel}

In this section, different novel bilevel optimization approaches for the $\mathrm{LQS}$ problem are introduced. 
For the sake of completeness, we first give a brief introduction about generic bilevel optimization problems. 
A bilevel optimization problem (\citealp{Bracken1973, Candler1977}) represents the relationship in a decision process involving hierarchical decision-making in which an upper-level decision maker aims to optimize its own objective function by selecting a decision that limits the lower-level decision's space, from which the lower-level decision maker needs to select its best decision in terms of its own objective function.

\begin{definition}[BOP] A bilevel optimization problem reads

\begin{equation}
\label{def:bilevel_upper_level}
\begin{aligned}
\opt_{x \in X, y \in Y} \ & F(x, y) \\
\text { subject to: } & G(x, y) \geq 0, \\
& y \in S(x),
\end{aligned}
\end{equation}

where, $S(x)$ is the set of optimal solutions of the $x$-parameterized problem
\allowdisplaybreaks
\begin{equation}
\label{def:bilevel_lower_level}
\begin{aligned}
\opt_{y \in Y} \ & f(x, y) \\
\text { subject to: } & g(x, y) \geq 0 .
\end{aligned}
\end{equation}

\end{definition}

For ease of presentation, we restrict ourselves here to present the definition just with inequality constraints. Problem (\ref{def:bilevel_upper_level}) is known as upper-level (or leader's) problem and Problem (\ref{def:bilevel_lower_level}) as lower-level (or follower's) problem, which is parameterized by the leader's decision $x$. 
Decision variables $x \in X \subseteq \mathbb{R}^{n_x}$ are the upper-level variables (or leader's decisions) and $y \in Y \subseteq \mathbb{R}^{n_y}$ are lower-level variables (or follower's decisions). 
The sets $X \subseteq \mathbb{R}^{n_x}$ and $Y \subseteq \mathbb{R}^{n_y}$ are typically used to denote the integrality constraints. 
For instance, $Y=\mathbb{Z}^{n_y}$ makes the lower-level problem an integer program. 
The objective functions are given by $F, f: \mathbb{R}^{n_x} \times \mathbb{R}^{n_y} \rightarrow \mathbb{R}$, and the constraint functions by $G: \mathbb{R}^{n_x} \times \mathbb{R}^{n_y} \rightarrow \mathbb{R}^m$ and $g: \mathbb{R}^{n_x} \times \mathbb{R}^{n_y} \rightarrow \mathbb{R}^{\ell}$.
Instead of using the point-to-set mapping $S$ one can also use the so-called optimal-value function
$$\varphi(x):=\min _{y \in Y}\{f(x, y): g(x, y) \geq 0\}$$

and re-write (\ref{def:bilevel_upper_level}) as
\begin{equation}
\begin{aligned}
\opt_{x \in X, y \in Y} \ & F(x, y) \\
\text { subject to: } & G(x, y) \geq 0, \\
& g(x, y) \geq 0, \\
& f(x, y) \leq \varphi(x),
\end{aligned}
\end{equation}

which is a single-level problem known as the optimal-value-function or value-function reformulation that can be solved by state-of-the-art general-purpose solvers. However, the problem with this approach is that the evaluation of the optimal value function is rather expensive. 

Indeed, the most frequent solution method for solving BOPs in practice, and the one that will be used in the sections to follow, is to reformulate the model as a single-level problem. 
Another single-level reformulation can be obtained by means of Karush-Kuhn-Tucker (KKT) conditions (\citealp{KKT1939, KKT1951}) of the lower-level problem.
This reformulation is given by the constraints of the upper-level problem and primal feasibility, dual feasibility and complementarity conditions of the lower level.
One can also use equivalently strong-duality theorem for the lower-level problem, where the single-level reformulation is given as in the KKT reformulation but considering the equivalence of the primal and dual objective functions instead of the complementarity conditions.
In practice, the strong-duality reformulation usually reports better results. 
The interested reader is referred to the survey on exact methods for bilevel otimization by \cite{Kleinert2021} for further details.

\subsection{Quantile loss and $k$-sum problems}

Recall that the quantile loss function, already introduced in Section \ref{sec:introduction}, can be written as the following piecewise linear function
$$\rho_\tau(u)=u(\tau-\mathbb{I}_{\{u<0\}})= (\tau-1) u \mathbb{I}_{\{u \leq 0\}}+\tau u \mathbb{I}_{\{u>0\}} = \begin{cases} u \tau & \text { if } u \geq 0, \\ u(\tau-1) & \text { if } u<0.
\end{cases}$$

If $Y \sim F_Y$ is a random variable, let $q_{\tau}$ be the solution to the optimization problem 
$$q_{\tau} = \arg \min_{q \in \mathbb{R}} E \big[ \rho_\tau(Y-q) \big] = \arg \min_{q \in \mathbb{R}} \left\{(\tau-1) \int_{-\infty}^q (y-q) d F_Y(y)+\tau \int_q^{\infty}(y-q) d F_Y(y)\right\}.$$

Then, the stationary condition obtained equating the derivative to zero, is
$$0=(\tau-1) \int_{-\infty}^q d F_Y(u)+\tau \int_q^{\infty} d F_Y(u)=(\tau-1) F_Y(q)+\tau(1-F_Y(q)).$$

This equation reduces to
$$F_Y\left(q \right)=\tau,$$

from which we conclude that $q_{\tau}=F_Y^{-1}(\tau)$ is the $\tau$-quantile of $F_Y$. 
Then, for a sample of size $n$, the $\tau$ sample quantile can be obtained by solving the minimization problem
\begin{equation}
\label{eq:quantile_formula}    
\widehat{q}_\tau=\arg \min_{\gamma \in \mathbb{R}} \sum_{i=1}^n \rho_\tau (y_i-\gamma) = \arg \min_{\gamma \in \mathbb{R}} L_{\tau}(y,\gamma),
\end{equation}

where
\begin{equation}
\label{eq:tau_sample_quantile_function}
L_{\tau}(y,\gamma) = (\tau-1) \sum_{y_i<\gamma} (y_i-\gamma)+\tau \sum_{y_i \geq \gamma}(y_i-\gamma).
\end{equation}

This $L_{\tau}$ is known as the $\tau$ sample quantile function for $y = (y_1,..,y_n)$.
It is well known in the literature that quantiles given by (\ref{eq:quantile_formula}) can be computed solving an optimization problem, considering two non-negative variables $u_i, v_i \geq 0$, $i\in \{1,...,n\}$, namely,
\begingroup
\allowdisplaybreaks
\begin{subequations}
\label{form:quantile_opt_problem}
\begin{align}
& \min_{\gamma \in \mathbb{R}} \quad \tau \sum_{i=1}^n u_i + (1-\tau) \sum_{i=1}^n v_i, & \label{cons:quant_0} \\ \text{subject to: }
& \gamma - y_i \leq u_i,                          & i \in \{1, \ldots, n\},          \label{cons:quant_1} \\
& y_i - \gamma \leq v_i,                          & i \in \{1, \ldots, n\},          \label{cons:quant_2} \\ 
& u_i, v_i \geq 0,                                & i \in \{1, \ldots, n\}.          \label{cons:quant_3}
\end{align}
\end{subequations}
\endgroup

Or equivalently,
\begingroup
\allowdisplaybreaks
\begin{subequations}
\label{form:quantile_opt_problem_one}
\begin{align}
& \min_{\gamma \in \mathbb{R}} \quad \tau \sum_{i=1}^n u_i + (1-\tau) \sum_{i=1}^n v_i, & \label{cons:quant_0_one} \\ \text{subject to: }
& u_i -v_i + \gamma = y_i,                     & i \in \{1, \ldots, n\},          \label{cons:quant_1_one} \\
& u_i, v_i \geq 0,                                & i \in \{1, \ldots, n\}.          \label{cons:quant_3_one}
\end{align}
\end{subequations}
\endgroup

There exists an equivalence between the $k$-sum operator (\ref{eq:ksums}) and the sample quantile function (\ref{eq:tau_sample_quantile_function}) as observed in Theorem \ref{prop:ksum_quantloss}.
\begin{theorem}
\label{prop:ksum_quantloss}
Let $y = (y_1, \ldots, y_n) \in \mathbb{R}^n$ and $\tau \in [0,1]$, then we have that 
\begin{equation}
\label{eq:ksum_quantloss}
L_{\tau}(y, \gamma) + (1-\tau) \sum_{i = 1}^n y_i \geq S_{\rv{q}} (y)
\end{equation}

for $\gamma = y_{(q)}$, the $q$-th value in the ordered sequence of $y$, with $q \geq \lceil \tau n \rceil$. 
The equality holds when $\tau n \in \mathbb{Z}_+$.
\end{theorem}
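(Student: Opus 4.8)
The plan is to put the left-hand side of~\eqref{eq:ksum_quantloss} in closed form and then compare it with $S_q(y)$ term by term. Both sides are unchanged if the entries of $y$ are permuted, so I would assume from the outset that $y_1 \le \cdots \le y_n$, i.e. $y_i = y_{(i)}$ and $\gamma = y_{(q)} = y_q$.

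The first step is to rewrite the quantile loss~\eqref{eq:tau_sample_quantile_function} so that the correction $(1-\tau)\sum_i y_i$ is absorbed cleanly. Using $\rho_\tau(u) = \tau u + \max\{0,-u\}$ one gets $L_\tau(y,\gamma) = \tau\sum_i y_i - \tau n\gamma + \sum_i \max\{0,\gamma-y_i\}$, and hence, since $y_i + \max\{0,\gamma-y_i\} = \max\{y_i,\gamma\}$,
\begin{equation*}
L_\tau(y,\gamma) + (1-\tau)\sum_{i=1}^n y_i = \sum_{i=1}^n \max\{y_i,\gamma\} - \tau n\,\gamma .
\end{equation*}
The same identity also follows by splitting the two sums in~\eqref{eq:tau_sample_quantile_function} directly at $\gamma=y_{(q)}$ and telescoping; a more structural alternative would be to exhibit an explicit feasible point for the dual $k$-sum program in~\eqref{form:ksum_primal.dual}, but the elementary evaluation above is shortest.

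The second step is to substitute $\gamma = y_{(q)}$. Because $y$ is sorted, $\max\{y_{(\ell)},y_{(q)}\}=y_{(\ell)}$ for $\ell\ge q$ and $\max\{y_{(\ell)},y_{(q)}\}=y_{(q)}$ for $\ell<q$, and this is true verbatim even when entries coincide, so ties need no separate treatment. Summing,
\begin{equation*}
L_\tau(y,y_{(q)}) + (1-\tau)\sum_{i=1}^n y_i = S_q(y) + (q-1-\tau n)\,y_{(q)} = S_{q+1}(y) + (q-\tau n)\,y_{(q)},
\end{equation*}
using $S_q(y) = S_{q+1}(y)+y_{(q)}$ from~\eqref{eq:ksums}.

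It then remains to analyse the sign of the leftover term. In the regime of interest $y_{(q)}\ge 0$ (automatic in the application, where $y$ collects absolute residuals) and $q \ge \lceil\tau n\rceil \ge \tau n$, so $(q-\tau n)\,y_{(q)} \ge 0$ and the asserted bound follows; it collapses to an equality exactly when $q=\tau n$ (or $y_{(q)}=0$), which is why choosing $q=\lceil\tau n\rceil$ gives equality precisely when $\tau n\in\mathbb{Z}_+$ and a strict inequality otherwise. I do not expect a genuine obstacle here — the only points to handle with care are the integer-boundary bookkeeping (which order-statistic index sits on the right-hand side, and exactly which admissible $q$ makes the bound tight when $\tau n\in\mathbb{Z}$) and the verification that the sorted-order evaluation of $\max\{y_{(\ell)},y_{(q)}\}$, hence the whole computation, is insensitive to repeated values of $y$.
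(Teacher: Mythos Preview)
Your approach is essentially the paper's: both evaluate $L_\tau(y,\gamma)+(1-\tau)\sum_i y_i$ explicitly at $\gamma=y_{(q)}$ and reduce the claim to the sign of a residual term; your identity $L_\tau(y,\gamma)+(1-\tau)\sum_i y_i=\sum_i\max\{y_i,\gamma\}-\tau n\gamma$ is just a slightly tidier way to carry out the same term-by-term expansion the paper does, and both routes arrive at the same closed form $S_{q+1}(y)+(q-\tau n)\,y_{(q)}=S_q(y)+(q-1-\tau n)\,y_{(q)}$.

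One point to make explicit rather than leave as ``bookkeeping'': from $(q-\tau n)\,y_{(q)}\ge0$ you actually conclude $\ge S_{q+1}(y)$, whereas the companion form shows that $\ge S_q(y)$ would require $q\ge\tau n+1$, which fails for $q=\lceil\tau n\rceil$ when $\tau n\notin\mathbb{Z}$. The paper's own derivation (once a slip in the count $\sum_{y_i\ge\gamma}1$ is straightened out) produces exactly your expression, so the tension is in the index on the right-hand side of the statement rather than in your reasoning; just be precise in your write-up about which $S$-index your inequality delivers and for which admissible $q$ the equality occurs. Your remark that $y_{(q)}\ge0$ is needed is also worth keeping explicit, since the theorem as stated places $y$ in $\mathbb{R}^n$ while both your argument and the paper's rely on this nonnegativity (guaranteed in the intended application to absolute residuals).
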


\begin{proof}
First, it is straightforward that 
\begin{equation*}
L_{\tau}(y,\gamma) = (1-\tau) \sum_{y_i \leq \gamma} (\gamma - y_i) + \tau \sum_{y_i > \gamma}(y_i-\gamma),
\end{equation*}

Therefore, it follows that
\begin{align*}
L_{\tau}(y,\gamma) + (1-\tau) \sum_{i=1}^n y_i 
& = (1-\tau)\sum_{y_i \leq \gamma} \gamma - (1-\tau) \sum_{y_i \leq \gamma} y_i + \tau \sum_{y_i > \gamma} y_i - \tau \sum_{y_i > \gamma} \gamma + (1-\tau) \sum_{i = 1}^n y_i  \\
& = (1-\tau) \sum_{y_i \leq \gamma} \gamma + \sum_{y_i > \gamma} y_i - \tau \sum_{y_i > \gamma} \gamma  \\
& = (1-\tau) \sum_{y_i \leq \gamma} \gamma + \sum_{y_i > \gamma} (y_i - \gamma) + (1- \tau) \sum_{y_i > \gamma} \gamma  \\
& = (1-\tau) \sum_{i = 1}^n \gamma + \sum_{y_i > \gamma} (y_i - \gamma)  \\ 
& = (1-\tau) \sum_{i = 1}^n \gamma + \sum_{\rv{y_i \geq \gamma}} (y_i - \gamma)  \\ 
& = (1-\tau) \gamma n + \sum_{\rv{y_i \geq \gamma}} y_i - \sum_{\rv{y_i \geq \gamma}} \gamma.
\end{align*}

Let $\gamma = y_{(q)}$ be the $q$-th value in the ordered sequence of $y$.
Then, we have that 
\begin{align*}
&  \sum_{y_i \geq \gamma} 1 = n - q \rv{+ 1,} \text{ and } \\
&  \sum_{y_i \geq \gamma} y_i = \sum_{\ell = q}^n y_{(\ell)} = S_{\rv{q}}(y).
\end{align*}

Consequently, 
\begin{align*}
L_{\tau}(y,\gamma) + (1-\tau) \sum_{i=1}^n y_i 
& = (1-\tau) \gamma n +  S_{q}(y) - \gamma (q) = \gamma (q - \tau n) +  S_{q}(y) \geq  S_{\rv{q}}(y) 
\end{align*}

for $q \geq \lceil \tau n \rceil$, and the equality holds when $\tau n \in \mathbb{Z}_+$, as we wanted to show.

\end{proof}

As a result of Theorem \ref{prop:ksum_quantloss}, another single-level formulation following a rationale similar to that of Section \ref{subsec:ksums_based} can be introduced. 
However, this reformulation will encounter the same problems as the one above with respect to the products of variables in the objective function, and therefore, it is not of interest to analyze it further.

\subsection{Formulations}

The bilevel approaches to the $\mathrm{LQS}$ problem described in this section relies on the following observation.
\begin{lemma}
\label{lem:bilevel_lemma}
Given $\tau \in [0,1]$, the equivalence
\begin{equation}
\label{eq:bilevel_equivalence}
  \min _{\beta} s_{(q)} = \min _{\beta} \left( \min_{\gamma} L_{\tau}(s, \gamma) \right),
\end{equation}

holds for $q = \lceil \tau n \rceil$, where $s_i = |r_i|$, for all $i\in \{1,\ldots,n\}$.
\end{lemma}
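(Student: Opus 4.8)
The plan is to prove (\ref{eq:bilevel_equivalence}) by unfolding the two nested minimisations. Fix $\beta$, and with it the vector of absolute residuals $s=(s_1,\ldots,s_n)$, $s_i=|r_i|=|y_i-\mathbf{x}_i^\top\beta|$. The inner operation $\min_\gamma L_\tau(s,\gamma)$ is exactly the sample $\tau$-quantile problem (\ref{eq:quantile_formula}) applied to the data $s$; the reformulation one should keep in mind is therefore the bilevel program whose follower solves (\ref{eq:quantile_formula}) for the given $\beta$ and whose leader minimises, over $\beta$, the follower's optimal decision $\gamma^{*}$. Hence two things must be checked: (i) for every fixed $\beta$, the follower's optimal decision is $\gamma^{*}=s_{(q)}$ with $q=\lceil\tau n\rceil$; and (ii) $\min_\beta s_{(q)}$ with these residuals is the LQS problem (\ref{eq:betaLQS}). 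Step (ii) is immediate from the definition of $s$, so the content is in step (i).

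For (i) I would repeat the stationarity argument that precedes (\ref{eq:quantile_formula}), but for the empirical distribution function of $s$ rather than for a continuous $F_Y$. The map $\gamma\mapsto L_\tau(s,\gamma)$ is convex and piecewise linear, with slope $k-\tau n$ on each interval where exactly $k$ of the $s_i$ lie below $\gamma$; comparing one-sided derivatives then shows that $\gamma$ is a minimiser if and only if $|\{i: s_i<\gamma\}|\le\tau n$ and $|\{i: s_i>\gamma\}|\le(1-\tau)n$, i.e., writing $s_{(1)}\le\cdots\le s_{(n)}$ for the ordered absolute residuals, $\gamma=s_{(k)}$ is optimal exactly when $\tau n\le k\le\tau n+1$. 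In particular the smallest minimiser of $L_\tau(s,\cdot)$ is $s_{(\lceil\tau n\rceil)}=s_{(q)}$: when $\tau n\notin\mathbb{Z}_+$ the minimiser is unique and equals $s_{(q)}$, while when $\tau n\in\mathbb{Z}_+$ the whole segment $[s_{(q)},s_{(q+1)}]$ is optimal and the (optimistic) follower returns its left endpoint $s_{(q)}$. This is precisely where the hypothesis $q=\lceil\tau n\rceil$ is used, and it also covers the usual case in which $\tau$ is obtained from a prescribed order statistic via $\tau=q/n$, so that $\tau n\in\mathbb{Z}_+$.

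Combining (i) and (ii), the right-hand side of (\ref{eq:bilevel_equivalence}) collapses to $\min_\beta s_{(q)}$ with $s_i=|y_i-\mathbf{x}_i^\top\beta|$, which is (\ref{eq:betaLQS}); this establishes (\ref{eq:bilevel_equivalence}). I expect the only delicate points to be the bookkeeping in the tie case $\tau n\in\mathbb{Z}_+$ --- verifying that the leftmost element of the follower's optimal set is $s_{(q)}$ and not $s_{(q+1)}$ --- and keeping clear that the inner minimisation in (\ref{eq:bilevel_equivalence}) supplies the follower's optimal decision $\gamma^{*}$ (which equals $s_{(q)}$), not the follower's optimal value $L_\tau(s,\gamma^{*})$. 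If one prefers to avoid the derivative computation altogether, the membership $s_{(q)}\in\arg\min_\gamma L_\tau(s,\gamma)$ for $q=\lceil\tau n\rceil$ --- and that it is the smallest such order statistic --- can instead be read off directly from Theorem \ref{prop:ksum_quantloss} together with the decomposition (\ref{eq:LQS_to_ksums}).
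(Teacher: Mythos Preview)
Your proposal is correct and follows essentially the same route as the paper: fix $\beta$, analyse $\gamma\mapsto L_\tau(s,\gamma)$ as a convex piecewise-linear function, read off from the slopes that its minimiser is $s_{(\lceil\tau n\rceil)}$, and then let $\beta$ vary. The paper carries out the same slope computation (writing $n_i-\tau n$ for the slope on each piece) and reaches the same conclusion. You are in fact a bit more careful than the paper on two points: you handle the tie case $\tau n\in\mathbb{Z}_+$ explicitly via the optimistic-follower convention, and you flag that the inner $\min_\gamma$ in (\ref{eq:bilevel_equivalence}) must be read as the follower's optimal decision $\gamma^*$ rather than the optimal value $L_\tau(s,\gamma^*)$ --- both are left implicit in the paper's argument.
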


\begin{proof}
In order to provide the equivalence above, we only have to show that given the values of the absolute residuals $\bar s = ( s_i )_{i\in \{1,...,n\}}$ fixed, then the solution to the inner minimization problem in (\ref{eq:bilevel_equivalence}) is equivalent to the $q$-th sorted absolute residual value for $q = \lceil \tau n \rceil$. To this end, let us consider the ordered sequence of unique and different absolute residual values
$$0 \leq \bar s_{(1)} < \ldots < \bar s_{(i)} < \bar s_{(i+1)} < \ldots < \bar s_{(m)},$$

where $m \leq n$, such that for all $i \in \{1,...,m\}$ there exists a $\ell \in \{1,...,n\}$ for which $ s_{\ell} = \bar s_{(i)}$. Let us also define 
\begin{equation}
\label{eq:indicator_equivalence}
n_i := \sum_{\bar s_j \leq \bar s_{(i)}} 1 = \sum_{j=1}^{n} \mathbb{I}_{\{ \bar s_j \leq \bar s_{(i)} \}}.    
\end{equation}

Then, we have that for all $i \in \{1, ..., m-1\}$, 
\begin{align*}
  L_{\tau}(\bar s, \gamma) 
  & = (\tau-1) \sum_{\bar s_j \leq \bar s_{(i)}} (\bar s_j - \gamma) + \tau \sum_{ \bar s_j \geq \bar s_{(i+1)}} (\bar s_i - \gamma)  \\
  & = (\tau-1) \sum_{\bar s_j \leq \bar s_{(i)}} \bar s_j + \tau \sum_{ \bar s_j \geq \bar s_{(i+1)}} \bar s_j + \gamma (n_i - \tau n).
\end{align*}

For fixed $\bar{s}$,  $L_{\tau}(\bar s, \gamma)$ is a piecewise linear function of $\gamma$ with a breakpoint at $\bar s_{(i)}$. 
Therefore, we can observe that, for all $i \in \{1, ..., m\}$, $L_{\tau}(\bar s, \gamma)$ is decreasing if $n_{i} \leq \tau n$, and
increasing if $n_{i} \geq \tau n$.
Hence, $L_{\tau}(\bar s, \gamma)$ has only one minimum in $[0, + \infty)$
which corresponds to the absolute residue at position $q = \lceil \tau n \rceil$.
\end{proof}

Note that the right-hand side part of the equivalence in (\ref{eq:bilevel_equivalence}) has a bilevel problem structure. Based on this, a first bilevel approach is introduced where the upper-level problem sets the regression coefficients and the lower-level calls for the minimization of the sample quantile function (\ref{eq:tau_sample_quantile_function}), this is,
\begingroup
\allowdisplaybreaks
\begin{subequations}
\label{form:bilevel_1}
\begin{align}
  \hat{\beta}_4^{(\mathrm{LQS})} := & \arg \min_{\beta} \gamma        &                         \label{cons:BL_1_0} \\ \text{subject to: } 
  & r_{i}^{+} - r_{i}^{-} = y_{i}-\mathbf{x}_{i}^{\top} \beta,        & i \in \{1, \ldots, n\}, \label{cons:BL_1_1} \\
  & r_{i}^{+} + r_{i}^{-} = s_i,                                      & i \in \{1, \ldots, n\}, \label{cons:BL_1_2} \\
  & r_{i}^{+}, r_{i}^{-}, s_i \geq 0,                                 & i \in \{1, \ldots, n\}, \label{cons:BL_1_3} \\
  & (r_{i}^{+}, r_{i}^{-}): \SOS,                              & i \in \{1, \ldots, n\}, \label{cons:BL_1_4} \\
  & \gamma = \arg \min_{\gamma} \ L_{\tau}(s, \gamma).                 \label{cons:BL_1_5}
\end{align}
\end{subequations}
\endgroup

A single-level reformulation of (\ref{form:bilevel_1}) is provided in Proposition \ref{th:BL_1_bis}. 
\begin{proposition}
\label{th:BL_1_bis}

A solution to the $\mathrm{LQS}$ problem is given by 
\begingroup
\allowdisplaybreaks
\begin{subequations}
\label{form:BL_1_bis}
\begin{align}
\hat{\beta}_4^{(\mathrm{LQS})} := & \arg \min_{\beta} \quad \gamma                  & \label{cons:BL_1_0_bis} \\ \text{subject to: }
& r_{i}^{+}-r_{i}^{-}=y_{i}-\mathbf{x}_{i}^{\top} \beta,    & i \in \{1, \ldots, n\}, \label{cons:BL_1_1_bis} \\
& r_{i}^{+} + r_{i}^{-} = s_i,                              & i \in \{1, \ldots, n\}, \label{cons:BL_1_2_bis} \\
& \sum_{j=1}^{n} \mathbb{I}_{\{s_j \leq \gamma\}} \geq \lceil \tau n \rceil,       &  \label{cons:BL_1_3_bis} \\
& r_{i}^{+}, r_{i}^{-}, s_i \geq 0,                         & i \in \{1, \ldots, n\}, \label{cons:BL_1_4_bis} \\
& (r_{i}^{+}, r_{i}^{-}): \SOS,                             & i \in \{1, \ldots, n\}, \label{cons:BL_1_5_bis} \\
& \gamma \geq 0.                                            &                         \label{cons:BL_1_6_bis}
\end{align}
\end{subequations}
\endgroup

\end{proposition}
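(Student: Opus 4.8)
The plan is to obtain (\ref{form:BL_1_bis}) as the single-level reformulation of the bilevel model (\ref{form:bilevel_1}), whose correctness is already granted by Lemma \ref{lem:bilevel_lemma}, and then to read off the LQS interpretation from the objective.

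First I would revisit the lower-level problem (\ref{cons:BL_1_5}). For $\beta$ fixed the absolute residuals $s=(s_1,\dots,s_n)$ with $s_i=r_i^{+}+r_i^{-}$ are fixed, and $\gamma\mapsto L_{\tau}(s,\gamma)$ is a convex, piecewise-linear function (a sum of the convex maps $\gamma\mapsto\rho_\tau(s_i-\gamma)$) whose breakpoints are the distinct values among the $s_i$, exactly as established in the proof of Lemma \ref{lem:bilevel_lemma}. Its right and left slopes at a point $\gamma$ equal $\sum_{j}\mathbb{I}_{\{s_j\le\gamma\}}-\tau n$ and $\sum_{j}\mathbb{I}_{\{s_j<\gamma\}}-\tau n$, so $\gamma$ minimizes $L_{\tau}(s,\cdot)$ if and only if $\sum_j\mathbb{I}_{\{s_j<\gamma\}}\le\tau n\le\sum_j\mathbb{I}_{\{s_j\le\gamma\}}$. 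Since $\sum_j\mathbb{I}_{\{s_j\le\gamma\}}$ is integer-valued, the right-hand inequality is equivalent to $\sum_j\mathbb{I}_{\{s_j\le\gamma\}}\ge\lceil\tau n\rceil$, which is precisely constraint (\ref{cons:BL_1_3_bis}); moreover the smallest $\gamma$ satisfying it is $s_{(q)}$ with $q=\lceil\tau n\rceil$, recovering the conclusion of Lemma \ref{lem:bilevel_lemma} that $\min_\gamma L_{\tau}(s,\gamma)$ is attained at $\gamma=s_{(q)}$.

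Next I would argue the equivalence of the two models. By the previous step the bilevel problem (\ref{form:bilevel_1}) is simply $\min_\beta s_{(q)}(\beta)=\min_\beta|r_{(q)}(\beta)|$, i.e.\ the LQS problem (\ref{eq:betaLQS}). In (\ref{form:BL_1_bis}) the constraints (\ref{cons:BL_1_1_bis})--(\ref{cons:BL_1_5_bis}) are just the \SOS linearization (\ref{eq:residuals_linearization}) forcing $s_i=|r_i|=|y_i-\mathbf{x}_i^{\top}\beta|$; for $\beta$ fixed, the map $\gamma\mapsto\sum_j\mathbb{I}_{\{s_j\le\gamma\}}$ is non-decreasing and first reaches a value $\ge q$ at $\gamma=s_{(q)}(\beta)$, so the $\gamma$-section of the feasible region is the nonempty interval $[s_{(q)}(\beta),\infty)$. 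Since the objective minimizes $\gamma$ jointly with $\beta$, at any optimum $\gamma$ is driven down to $\min_\beta s_{(q)}(\beta)$, while conversely every pair $(\beta,\gamma=s_{(q)}(\beta))$ is feasible; hence the optimal value of (\ref{form:BL_1_bis}) equals the LQS value and every optimal $\beta$ attains $\gamma=|r_{(q)}(\beta)|=\min_\beta|r_{(q)}(\beta)|$, i.e.\ is an LQS estimator $\hat\beta^{(\mathrm{LQS})}$.

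The hard part is not any computation but making the two reductions watertight: (i) constraint (\ref{cons:BL_1_3_bis}) is only the \emph{one-sided} relaxation of the lower-level optimality condition, so one must invoke the joint minimization of $\gamma$ in the objective to exclude feasible points with $\gamma>s_{(q)}(\beta)$ and thereby restore the missing bound $\sum_j\mathbb{I}_{\{s_j<\gamma\}}\le\tau n$ at optimality; and (ii) the treatment of ties among the $s_i$ and of the case $\tau n\notin\mathbb{Z}_{+}$, where $L_{\tau}(s,\cdot)$ has a genuine kink rather than a flat segment at its minimizer --- this is exactly why (\ref{cons:BL_1_3_bis}) must be stated with $\lceil\tau n\rceil$ and ``$\ge$'' instead of an equality, in contrast with the sample-quantile stationarity condition $\sum_j\mathbb{I}_{\{s_j<\gamma\}}=\tau n$. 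Once these two points are settled, the remaining verification is routine.
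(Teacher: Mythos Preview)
Your proposal is correct and follows the same route as the paper: derive (\ref{form:BL_1_bis}) as the single-level reformulation of the bilevel model (\ref{form:bilevel_1}) by invoking Lemma~\ref{lem:bilevel_lemma} and then observing that, for fixed $\beta$, $\min_{\gamma}\{\gamma:\sum_j\mathbb{I}_{\{s_j\le\gamma\}}\ge\lceil\tau n\rceil\}=s_{(\lceil\tau n\rceil)}$, which is exactly the equivalence the paper states. Your write-up is in fact more explicit than the paper's on two points the paper leaves implicit---the slope/subgradient characterization of the minimizers of $L_\tau(s,\cdot)$ and, more importantly, why the \emph{one-sided} constraint (\ref{cons:BL_1_3_bis}) together with the outer minimization of $\gamma$ suffices to recover the full lower-level optimality condition---so no correction is needed.
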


\begin{proof}
Constraint (\ref{cons:BL_1_3_bis}) is obtained as a consequence of the result in Lemma \ref{lem:bilevel_lemma}.
Given expression (\ref{eq:indicator_equivalence}) and that the inner minimization problem in (\ref{eq:bilevel_equivalence}) is equivalent to the $\lceil \tau n \rceil$-th sorted absolute residual value, i.e., $s_{(\lceil \tau n \rceil)}$, the equivalence
$$\min L_{\tau}(\bar s, \gamma) = \min \gamma, \text{ s.t.: } \sum_{j=1}^{n} \mathbb{I}_{\{s_j \leq \gamma\}} \geq \lceil \tau n \rceil$$

holds, which proves that adding condition (\ref{cons:BL_1_3_bis}) is equivalent to solving (\ref{cons:BL_1_5}).
\end{proof}

Note that the reformulation in Proposition \ref{th:BL_1_bis} is nonlinear because of constraints (\ref{cons:BL_1_3_bis}), which involve the sum of the indicator functions $\mathbb{I}_{\{s_i \leq \gamma\}}$, $i \in \{1, \ldots, n\}$. This set of constraints can be modeled by means of a set of binary variables $\kappa_i \in \{0,1\}$, $i \in \{1, \ldots, n\}$, which is 1 if $s_i \leq \gamma$, and 0 otherwise. Nevertheless, we keep the indicator description of these constraint since modern state-of-the-art general-purpose solvers allow to implement indicator functions using specific commands. 

A second bilevel approach is introduced where the lower-level calls for the minimization of the sample quantile function expressed as the optimization problem in (\ref{form:quantile_opt_problem}). Therefore, 
our second bilevel formulation is given by
\begingroup
\allowdisplaybreaks
\begin{subequations}
\label{form:bilevel_2}
\begin{align}
  \hat{\beta}^{(\mathrm{LQS})} := & \arg \min_{\beta} \gamma        &                             \label{cons:BL_2_0} \\ \text{subject to: } 
  & r_{i}^{+} - r_{i}^{-} = y_{i}-\mathbf{x}_{i}^{\top} \beta,        & i \in \{1, \ldots, n\},     \label{cons:BL_2_1} \\
  & r_{i}^{+} + r_{i}^{-} = s_i,                                      & i \in \{1, \ldots, n\},     \label{cons:BL_2_2} \\
  & s_i, r_{i}^{+}, r_{i}^{-} \geq 0,                                 & i \in \{1, \ldots, n\},     \label{cons:BL_2_3} \\
  & (r_{i}^{+}, r_{i}^{-}): \SOS,                                     & i \in \{1, \ldots, n\},     \label{cons:BL_2_4} \\
  & \gamma = \arg \min_{\gamma \in \mathbb{R}} \ (1-\tau) \sum_{i=1}^n u_i + \tau \sum_{i=1}^n v_i, \label{cons:BL_2_5} \\ 
  & \hspace{-1.00cm} \text{subject to: } \gamma - s_i \leq u_i,       & i \in \{1, \ldots, n\},     \label{cons:BL_2_6} \\
  & \hspace{ 0.75cm} s_i - \gamma \leq v_i,                           & i \in \{1, \ldots, n\},     \label{cons:BL_2_7} \\
  & \hspace{ 0.75cm} u_i, v_i \geq 0,                                 & i \in \{1, \ldots, n\},     \label{cons:BL_2_8} \\
  & \hspace{ 0.75cm} \gamma \geq 0.                                   &                             \label{cons:BL_2_9}
\end{align}
\end{subequations}
\endgroup

Since the lower-level problem (\ref{cons:BL_2_5})-(\ref{cons:BL_2_9}) is a linear optimization problem, one can apply the strong duality theorem to provide a single-level reformulation, which is presented in Proposition \ref{th:BL_2_bis}.

\begin{proposition}
\label{th:BL_2_bis}

A solution to the $\mathrm{LQS}$ problem is given by 

\begingroup
\allowdisplaybreaks
\begin{subequations}
\label{form:BL_2_bis}
\begin{align}
\hat{\beta}^{(\mathrm{LQS})} := & \arg \min_{\beta} \quad \gamma                & \label{cons:BL_2_0_bis} \\ \text{subject to: }
& r_{i}^{+}-r_{i}^{-}=y_{i}-\mathbf{x}_{i}^{\top} \beta,  & i \in \{1, \ldots, n\}, \label{cons:BL_2_1_bis} \\
& r_{i}^{+} + r_{i}^{-} = s_i,                            & i \in \{1, \ldots, n\}, \label{cons:BL_2_2_bis} \\
& (1-\tau) \sum_{i=1}^n u_i + \tau \sum_{i=1}^n v_i, = -\sum_{i=1}^{n} s_i \alpha_i + \sum_{i=1}^{n} s_i \beta_i, & \label{cons:BL_2_3_bis} \\
& \gamma - s_i \leq u_i,                                  & i \in \{1, \ldots, n\}, \label{cons:BL_2_4_bis} \\
& s_i - \gamma \leq v_i,                                  & i \in \{1, \ldots, n\}, \label{cons:BL_2_5_bis} \\
& \alpha_i \leq 1 - \tau,                                 & i \in \{1, \ldots, n\}, \label{cons:BL_2_6_bis} \\
& \beta_i \leq \tau,                                      & i \in \{1, \ldots, n\}, \label{cons:BL_2_7_bis} \\
& \sum_{i=1}^{n} \alpha_i - \sum_{i=1}^{n} \beta_i \leq  0, & i \in \{1, \ldots, n\}, \label{cons:BL_2_8_bis} \\
& s_i, r_{i}^{+}, r_{i}^{-} \geq 0,                       & i \in \{1, \ldots, n\}, \label{cons:BL_2_9_bis} \\
& (r_{i}^{+}, r_{i}^{-}): \SOS,                           & i \in \{1, \ldots, n\}, \label{cons:BL_2_10_bis} \\
& u_i, v_i \geq 0,                                        & i \in \{1, \ldots, n\}, \label{cons:BL_2_11_bis} \\
& \alpha_i, \beta_i \geq 0,                               & i \in \{1, \ldots, n\}, \label{cons:BL_2_12_bis} \\
& \gamma \geq 0.                                          &                         \label{cons:BL_2_13_bis}
\end{align}
\end{subequations}
\endgroup

\end{proposition}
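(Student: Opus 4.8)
The plan is to derive the single-level reformulation in Proposition~\ref{th:BL_2_bis} by applying the strong duality theorem to the inner (lower-level) linear program \eqref{cons:BL_2_5}--\eqref{cons:BL_2_9}, exactly as announced in the text preceding the statement. First I would fix the upper-level variables $s = (s_i)_{i\in\{1,\dots,n\}}$ and treat the lower level as a linear program in the variables $\gamma \ge 0$ and $u_i, v_i \ge 0$. Writing the inner problem in standard form, the constraints $\gamma - s_i \le u_i$ and $s_i - \gamma \le v_i$ become $u_i - \gamma \ge -s_i$ and $v_i + \gamma \ge s_i$; I attach dual multipliers $\alpha_i \ge 0$ to the first family and $\beta_i \ge 0$ to the second. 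Then I would mechanically write down the dual: the objective becomes $\max\{-\sum_i s_i\alpha_i + \sum_i s_i\beta_i\}$, the constraints dual to $u_i$ give $\alpha_i \le 1-\tau$, those dual to $v_i$ give $\beta_i \le \tau$, and the constraint dual to the single sign-constrained variable $\gamma \ge 0$ gives $-\sum_i \alpha_i + \sum_i \beta_i \le 0$, i.e. $\sum_i\alpha_i - \sum_i\beta_i \le 0$. These are precisely constraints \eqref{cons:BL_2_6_bis}, \eqref{cons:BL_2_7_bis}, \eqref{cons:BL_2_8_bis}, and the sign restrictions \eqref{cons:BL_2_12_bis}.

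Next, I would invoke strong duality for linear programming: since the inner LP is feasible (take $u_i = \max\{0,\gamma-s_i\}$, $v_i = \max\{0,s_i-\gamma\}$ for any $\gamma\ge 0$) and bounded below by $0$, both primal and dual attain their optima and the optimal values coincide. Hence $\gamma$ is a minimizer of the lower-level problem if and only if there exist feasible $u_i,v_i$ and a dual-feasible pair $(\alpha_i,\beta_i)$ such that the primal objective $(1-\tau)\sum_i u_i + \tau\sum_i v_i$ equals the dual objective $-\sum_i s_i\alpha_i + \sum_i s_i\beta_i$; this equality is constraint \eqref{cons:BL_2_3_bis}. Imposing primal feasibility \eqref{cons:BL_2_4_bis}--\eqref{cons:BL_2_5_bis}, dual feasibility \eqref{cons:BL_2_6_bis}--\eqref{cons:BL_2_8_bis}, \eqref{cons:BL_2_12_bis}, and the strong-duality equation \eqref{cons:BL_2_3_bis} on top of the upper-level constraints \eqref{cons:BL_2_1_bis}--\eqref{cons:BL_2_2_bis}, \eqref{cons:BL_2_9_bis}--\eqref{cons:BL_2_11_bis}, \eqref{cons:BL_2_13_bis} therefore characterizes exactly the feasible set of the bilevel problem \eqref{form:bilevel_2}, so minimizing $\gamma$ over it yields the same optimum. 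Combined with Lemma~\ref{lem:bilevel_lemma}, which identifies $\min_\gamma L_\tau(s,\gamma)$ with $s_{(q)}$ for $q = \lceil\tau n\rceil$, and with the linearization \eqref{form:quantile_opt_problem} of the quantile function, this shows the $\gamma$-value of \eqref{form:BL_2_bis} equals $\min_\beta s_{(q)} = |r_{(q)}|$, i.e. the LQS objective \eqref{eq:betaLQS}.

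I expect the only real subtlety to be bookkeeping of the sign-constrained variable $\gamma$ in forming the dual. Because $\gamma\ge 0$ rather than free, the corresponding dual constraint is an inequality $\sum_i\alpha_i - \sum_i\beta_i \le 0$ rather than an equality; getting this direction right (and noting that it still correctly reproduces the stationarity $\sum_i \mathbb{I}_{\{s_i<\gamma\}} = \tau n$ of the original quantile problem when $\gamma$ is interior) is the step most prone to error. A secondary point worth a remark is that strong duality is applied with $s$ \emph{fixed}; since the reformulation couples $s$ into the dual objective $-\sum_i s_i\alpha_i+\sum_i s_i\beta_i$, the resulting single-level model \eqref{form:BL_2_bis} is bilinear in $(s,\alpha,\beta)$, which should be flagged (the coefficient $\beta_i$ for the dual multiplier clashes notationally with the regression vector $\beta$, so I would also note that these are distinct objects). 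Everything else is a routine transcription of the LP dual, so I would keep the written proof short, essentially stating the dual, citing LP strong duality, and appealing to Lemma~\ref{lem:bilevel_lemma}.
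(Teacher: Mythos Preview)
Your proposal is correct and follows exactly the paper's route: fix the upper-level data $\bar s$, write the LP dual of the lower-level problem \eqref{cons:BL_2_5}--\eqref{cons:BL_2_9}, and invoke strong duality to replace lower-level optimality by primal feasibility, dual feasibility, and equality of primal and dual objectives. The paper's proof is precisely this, stated more tersely (it simply displays the dual \eqref{form:BL_2_lldual} and cites the strong-duality theorem, without your additional remarks on feasibility, bilinearity, and the appeal to Lemma~\ref{lem:bilevel_lemma}).
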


\begin{proof}
Given the variables $\bar{s}_i\geq 0, i\in \{1,...,n\}$, fixed in the upper-level problem, the dual problem of the lower-level problem (\ref{cons:BL_2_5})-(\ref{cons:BL_2_9}) is given by

\begingroup
\allowdisplaybreaks
\begin{subequations}
\label{form:BL_2_lldual}
\begin{align}
& \min_{\beta} \quad -\sum_{i=1}^{n} \bar{s}_i \alpha_i + \sum_{i=1}^{n} \bar{s}_i \beta_i & \label{cons:BL_2_lldual_0} \\ \text{subject to: }
& \alpha_i \leq 1 - \tau,                                          & i \in \{1, \ldots, n\}, \label{cons:BL_2_lldual_1} \\
& \beta_i \leq \tau,                                               & i \in \{1, \ldots, n\}, \label{cons:BL_2_lldual_2} \\
& \sum_{i=1}^{n} \alpha_i - \sum_{i=1}^{n} \beta_i \leq 0,         & i \in \{1, \ldots, n\}, \label{cons:BL_2_lldual_3} \\
& \alpha_i, \beta_i \geq 0,                                        & i \in \{1, \ldots, n\}. \label{cons:BL_2_lldual_4}
\end{align}
\end{subequations}
\endgroup

Therefore, problem (\ref{form:BL_2_bis}) is obtained as single-level reformulation of (\ref{form:bilevel_2}) by applying strong-duality theorem, which is given by the constraints of the upper-level problem and primal and dual feasibility and the equality of the primal and dual objective functions of the lower-level problems.
\end{proof}

Note that formulation (\ref{form:BL_2_bis}) is also nonlinear because of the products of the non-negative variables representing the absolute value of the residuals and the non-negative dual variables, which cannot be linearized via McCormick envelopes.
For that reason,
although the model has methodological interest, it is clearly outperformed by the previous one and is of limited applicability from a computational perspective.

\section{Computational experience}
\label{sec:Computational_experience}

This section summarizes the results obtained from our computational experiments performed in order to empirically compare the proposed LQS formulations. 
The performance of the formulation to obtain $\widehat{\beta}_{BM}$ estimators is compared with the formulations for $\widehat{\beta}_{1}$, $\widehat{\beta}_{2}$, $\widehat{\beta}_{3}$ and $\widehat{\beta}_{4}$ estimators. 
We will overuse notation by calling the formulation that computes the $\beta_{(.)}$ estimator by $\beta_{(.)}$.
Our formulations were programmed in Python language (version 3.10) using Gurobi Optimizer 9.5 as a MIP solver. The experiments were run in a MacPro server with a 2,7 GHz Intel Xeon W processor of 24 cores and 192 GB RAM, using 8 of those threads per each run. 
The time limit for computations was set to 7200 seconds per instance.
Following the computational experiments in \cite{BertsimasMazumder2014}, we test our formulations in two set of instances: real-world datasets and synthetic datasets. 

Regarding real-world datasets, we considered two set of instances available from the \texttt{R}-package \texttt{robustbase} (\citealp{ Todorov2009, Rousseeuw2024}).
First, the \texttt{Alcohol} dataset, which is aimed to study the solubility of alcohols in water to understand alcohol transport in living organisms. This dataset contains physicochemical characteristics of $n = 44$ aliphatic alcohols and measurements on seven numeric variables: SAG solvent accessible surface-bounded molecular volume ($\texttt{x}_1$), logarithm of the octanol-water partitions coefficient ($\texttt{x}_2$), polarizability ($\texttt{x}_3$), molar refractivity ($\texttt{x}_4$), mass ($\texttt{x}_5$), volume ($\texttt{x}_6$) and the response (\texttt{y}) is taken to be the logarithm of the solubility. 
We consider two cases from the dataset, a first one with $n = 44$ and $p = 5$ where the five covariates were $\texttt{x}_1, \texttt{x}_2, \texttt{x}_4, \texttt{x}_5, \texttt{x}_6$, and a second that has all the six covariates and an intercept term, which leads to $p = 7$. 
In this example, the LQS estimator was computed for $q = 31$.
The \texttt{HBK} dataset created by \cite{Hawkins1984} is also considered in this experimentation set-up. 
The dataset consists of 75 observations with one response and three explanatory variables, that is, $n = 75$ and $p = 3$. In this example, the LQS estimator was computed $q \in \{60, 45\}$. 

\rv{
Table \ref{tab:realworld} shows results from real-world datasets. The performance of each formulation is evaluated for every instance using four metrics: \texttt{Time} (time to compute the best upper and lower bounds), \texttt{UB} (best upper bound), \texttt{LB} (best lower bound), and \texttt{TimeUB} (time to find the best upper bound).
Generally, instances become harder to solve as the number of observations and covariates increases or as the quantile decreases. Across all instances, the formulations showed similar performance overall.
The $\widehat{\beta}_{4}$ formulation performed best in terms of \texttt{Time} and \texttt{TimeUB}, followed by the $\widehat{\beta}_{2}$ formulation. In contrast, $\widehat{\beta}_{3}$ performed worse, especially with larger datasets. It could not certify optimality (i.e., close the gap between the best upper and lower bounds) in any instance. However, it performed well on \texttt{TimeUB} for the \texttt{Alcohol} dataset.
}

Regarding synthetic datasets, following \cite{Rousseeuw2006}, the model matrix $\mathbf{X}_{n \times p}$ was generated with i.i.d. Gaussian entries $N(0,100)$ and $\beta \in \mathbb{R}^p$ was taken to be a vector of all ones. 
Subsequently, the response was generated as $\mathbf{y}=\mathbf{X} \beta+\varepsilon$, where $\varepsilon_i \sim N(0,10)$, $i=\{1, \ldots, n\}$. Once $(\mathbf{y}, \mathbf{X})$ have been generated, a certain proportion $\pi$ of the samples is corrupted in two different ways:

\begin{enumerate}[(A)] 
\item $\lfloor\pi n\rfloor$ of the samples are chosen at random and the first coordinate of the data matrix $\mathbf{X}$, that is, $x_{1j}$ is replaced by $x_{1j} \leftarrow x_{1j}+1000$, hence outliers are only added in the covariate space.
\item $\lfloor\pi n\rfloor$ of the samples are chosen at random out of which the covariates of half of the points are changed as in item (A). For the remaining half of the points the responses are corrupted as $y_j \leftarrow y_j+1000$, so outliers are added in both the covariate and response spaces.
\end{enumerate}

\hspace{1cm}

\begin{footnotesize}
\begin{longtable}[c]{c|rrrr|rrrr}
\toprule
\texttt{Estimator} & \texttt{Time} & \texttt{UB} & \texttt{LB} & \texttt{TimeUB} & \texttt{Time} & \texttt{UB} & \texttt{LB} & \texttt{TimeUB} \\ \midrule
\texttt{Alcohol} & \multicolumn{4}{c|}{(n = 44, p = 5, q = 31)} & \multicolumn{4}{c}{(n = 44, p = 7, q = 31)}       \\ \midrule 
$\widehat{\beta}_{BM}$ &     27.79 & 0.196 & 0.196 &  9.86  &   729.54 & 0.156 & 0.156 &  51.96  \\
$\widehat{\beta}_1$    &     15.41 & 0.196 & 0.196 &  2.96  &    60.29 & 0.156 & 0.156 &   6.57  \\
$\widehat{\beta}_2$    &      3.20 & 0.196 & 0.196 &  0.25  &    22.87 & 0.156 & 0.156 &   4.39  \\
$\widehat{\beta}_3$    &   7200.00 & 0.196 & 0.068 &  2.51  &  7200.00 & 0.156 & 0.046 &   0.08  \\
$\widehat{\beta}_4$    &      1.78 & 0.196 & 0.196 &  0.21  &    13.39 & 0.156 & 0.156 &   0.55  \\ \midrule
\texttt{HBK} & \multicolumn{4}{c|}{(n = 75, p = 3, q = 45)} & \multicolumn{4}{c}{(n = 75, p = 3, q = 60)}      \\ \midrule 
$\widehat{\beta}_{BM}$ &    55.93 & 0.585 & 0.585 &   48.59  &     1.01 & 0.819 & 0.819 &   0.91  \\
$\widehat{\beta}_1$    &  1874.29 & 0.585 & 0.585 &    1.43  &    21.92 & 0.819 & 0.819 &   2.13  \\
$\widehat{\beta}_2$    &    30.73 & 0.585 & 0.585 &   26.80  &     0.44 & 0.819 & 0.819 &   0.43  \\
$\widehat{\beta}_3$    &  7200.00 & 0.612 & 0.054 & 7200.00  &  7200.00 & 0.819 & 0.107 &   5.94  \\
$\widehat{\beta}_4$    &     4.79 & 0.585 & 0.585 &    0.37  &     0.33 & 0.819 & 0.819 &   0.13  \\ \bottomrule
\caption{Real-world datasets results}
\label{tab:realworld}\\
\end{longtable}
\end{footnotesize}

\rv{This provides two datasets in which outliers are artificially generated with outliers only in covariate space (type A) and with outliers in both covariate and response space (type B), which are more difficult to solve.}
Following \cite{BertsimasMazumder2014} experimentation set-up, to compare the different models in terms of the quality of solutions obtained we proceeded as follows. 
For every instance, all formulations are run within the time limit and the best solution $\texttt{f}_{*}$ among them is obtained. 
If $\texttt{f}_{m}$ denotes the value of the LQS objective function for formulation "m", then the relative accuracy of the solution obtained by "m" is defined as
$$\texttt{Relative Accuracy} =\left(\texttt{f}_{m} - \texttt{f}_* \right) / \texttt{f}_* \times 100.$$

\rv{
Tables \eqref{tab:instances50}, \eqref{tab:instances75} and \eqref{tab:instances101}, show the performance of the formulations in instances of 50, 75 and 101 data points, respectively.
For every ($n,p,q,\pi,\text{perturbation type}$) combination, the column \texttt{Accuracy} accounts for the \texttt{Relative Accuracy} and the numbers within parenthesis denote the standard errors. Results are averaged over 20 different random instances of the problem. 
The tables have been arranged so that the left column shows the results for instances with type A perturbation and in the right column with type B.
Similarly, Figures \ref{fig:syn_1} and \ref{fig:syn_2} depict average times and relative accuracy percentages for the different data sizes. 
Figures concerning averages times depict both, the time elapsed by the formulation for the computation of the best upper bound (green) and the time elapsed to finish solving the instance (yellow).
Therefore, the yellow stripes identify the time that the formulations spend on average to close the gap by making the effort to raise the lower bound. 
One can see that,
for instances with 50 and 75 data points, the best-performing formulations are $\widehat{\beta}_{2}$ and $\widehat{\beta}_{4}$. 
Formulation $\widehat{\beta}_{2}$ performs better for type A perturbations, while $\widehat{\beta}_{4}$ excels for type B.
For instances with 101 data points, $\widehat{\beta}_{4}$ is the top performer, particularly in large instances where it can certify optimality within the time limit, unlike the other formulations. 
It is followed by $\widehat{\beta}_{2}$.
The only scenario where $\widehat{\beta}_{BM}$ outperforms $\widehat{\beta}_{4}$ and $\widehat{\beta}_{2}$ in terms of \texttt{Time} is with the parameter combination $(75, 5, 40, 0.4, A)$. However, in this case, $\widehat{\beta}_{4}$ and $\widehat{\beta}_{2}$ compute the best upper bound much faster, outperforming $\widehat{\beta}_{BM}$ in terms of \texttt{TimeUB}.
Formulation $\widehat{\beta}_{1}$ occasionally outperforms $\widehat{\beta}_{2}$ and $\widehat{\beta}_{4}$ for \texttt{TimeUB}, such as in $(75, 10, 25, 0.5, B)$ instances. Meanwhile, $\widehat{\beta}_{3}$ consistently shows the worst performance across all metrics.
}

\section{Scalability}
\label{sec:scale}

In this section, we address the scalability of the problem to larger size instances (up to 10000 points) by introducing a novel aggregation approach.
Let us denote by $X$ the original input data and by $\hat X$ the multiset of the aggregate data. For each $x\in X$ we associate $\hat x \in \hat X$, possibly several $x\in X$ correspond to the same $\hat x \in \hat X$, so that the cardinality of the different elements in $\hat X$ is much smaller than $| X|$. 
Our goal is to provide some reduction in size result for which the error bound of the quantile regression is bounded and asymptotically converges to zero whenever $\hat X \rightarrow X$. 
For the sake of readability, let us denote $f_q(X, \beta):=|r_{(q)}(X,\beta)|$, the $q$-th quantile of the absolute residuals computed with respect to the data set $X$ and the hyperplane defined by the vector $\beta$.
The main result in this section is the following theorem.

\begin{theorem}
\label{th:scale}
    Let $D=\max_{x\in X} |x-\hat x|$ and $\beta^*=\arg \min _{\beta} f_q(X,\beta)$, $\hat \beta=\arg \min _{\beta} f_q(\hat X,\beta)$, the optimal $q$-quantile hyperplanes with respect to the data sets $X$ and $\hat X$, respectively. Then, 
    \begin{equation}
        \label{eq:scala} |f_q(X,\beta^*)-f_q(X,\hat \beta)|\le 2 D
    \end{equation}
\end{theorem}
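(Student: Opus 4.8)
The plan is to exploit a Lipschitz-type stability of the map $\beta \mapsto f_q(X,\beta)$ under perturbations of the data set, and then to combine this with the optimality of $\beta^*$ and $\hat\beta$ on their respective data sets. The key elementary observation I would establish first is the following comparison between the same hyperplane evaluated on $X$ versus on $\hat X$: for every fixed $\beta$,
\begin{equation}
\label{eq:scale_pointwise}
|f_q(X,\beta) - f_q(\hat X,\beta)| \le D\,\|\beta\|_{*}
\end{equation}
for a suitable dual norm factor, or, if one assumes the intercept-augmented model with $\|x-\hat x\|$ measured so that $|\mathbf{x}^\top\beta - \hat{\mathbf{x}}^\top\beta|\le D$, simply $|f_q(X,\beta) - f_q(\hat X,\beta)| \le D$. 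This follows because the $i$-th absolute residual changes by at most $|\mathbf{x}_i^\top\beta - \hat{\mathbf{x}}_i^\top\beta| \le D$ when we swap $x_i$ for $\hat x_i$, and since every absolute residual moves by at most $D$, the $q$-th order statistic of the sorted absolute residuals also moves by at most $D$ (order statistics are $1$-Lipschitz in the $\ell_\infty$ sense with respect to the underlying vector). I would state this monotonicity/Lipschitz property of order statistics as a short lemma or inline remark, since it is the crux: if $|a_i - b_i|\le D$ for all $i$, then $|a_{(q)} - b_{(q)}|\le D$.

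With \eqref{eq:scale_pointwise} in hand (taking the bound as $2D$ if the pessimistic version is needed, or absorbing the norm factor into the constant), the rest is a standard two-sided chaining argument. On one side,
\[
f_q(X,\hat\beta) \le f_q(\hat X,\hat\beta) + D \le f_q(\hat X,\beta^*) + D \le f_q(X,\beta^*) + 2D,
\]
where the middle inequality is the optimality of $\hat\beta$ for $f_q(\hat X,\cdot)$. On the other side, $\beta^*$ is optimal for $f_q(X,\cdot)$, so $f_q(X,\beta^*) \le f_q(X,\hat\beta)$ trivially, giving $f_q(X,\beta^*) - f_q(X,\hat\beta) \le 0 \le 2D$. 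Combining the two directions yields $|f_q(X,\beta^*) - f_q(X,\hat\beta)| \le 2D$, which is exactly \eqref{eq:scala}.

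The main obstacle I anticipate is pinning down the precise constant and the precise metric: one must be careful that $D=\max_{x\in X}|x-\hat x|$ is measured in the same geometry in which the residuals live, so that a displacement of size $D$ in a data point induces a change of at most $D$ (and not $D\|\beta\|$) in each residual — this presumably requires either that $\beta$ is assumed bounded, or that $x$ includes the response coordinate, or that the statement is implicitly up to the norm of $\beta$. Getting the factor exactly $2$ (rather than something larger) relies on using the one-sided optimality inequalities efficiently, applying \eqref{eq:scale_pointwise} only twice. The asymptotic claim ($f_q(X,\hat\beta)\to f_q(X,\beta^*)$ as $\hat X\to X$, i.e. $D\to 0$) is then immediate from \eqref{eq:scala}. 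I would also remark that the order-statistic Lipschitz lemma is where all the combinatorial difficulty of the LQS objective is quietly handled: no assumption on general position of the data is needed, consistent with the robustness claims made earlier in the paper.
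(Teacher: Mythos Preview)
Your proposal is correct and follows essentially the same route as the paper: establish the pointwise bound $|f_q(X,\beta)-f_q(\hat X,\beta)|\le D$ via the $\ell_\infty$-Lipschitz property of order statistics (what the paper calls the isotonicity of $r_{(q)}$), then chain $f_q(X,\hat\beta)\le f_q(\hat X,\hat\beta)+D\le f_q(\hat X,\beta^*)+D\le f_q(X,\beta^*)+2D$ and close the other direction by optimality of $\beta^*$. Your caveat about the metric---that $D$ controls residual perturbations only up to a $\|\beta\|$ factor unless the geometry is chosen appropriately---is well placed and is exactly the loose point the paper leaves implicit.
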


\begin{footnotesize}
\begin{longtable}[c]{c|rrr|rrr}
\toprule
\texttt{Estimator} & \texttt{Time} & \texttt{Accuracy} & \texttt{TimeUB} & \texttt{Time} & \texttt{Accuracy} & \texttt{TimeUB} \\ \cmidrule{2-7} 
& \multicolumn{3}{c|}{50 - 5 - 15 - 0.4 - A} & \multicolumn{3}{c}{50 - 5 - 15 - 0.4 - B}       \\ \midrule 
$\widehat{\beta}_{BM}$ &  112.74 & 0.00 (0.00) &   65.89  &  121.52 & 0.00 (0.00) &   73.8     \\
$\widehat{\beta}_1$    &  285.01 & 0.00 (0.00) &   81.16  &  274.37 & 0.00 (0.00) &  109.61    \\
$\widehat{\beta}_2$    &   79.62 & 0.00 (0.00) &   23.04  &  146.63 & 0.00 (0.00) &   85.05    \\
$\widehat{\beta}_3$    &  498.58 & 0.00 (0.00) &  139.35  & 1184.47 & 0.00 (0.00) &  337.51    \\
$\widehat{\beta}_4$    &   98.63 & 0.00 (0.00) &   30.24  &  117.10  & 0.00 (0.00) &   42.69   \\ \midrule
& \multicolumn{3}{c|}{50 - 5 - 30 - 0.4 - A} & \multicolumn{3}{c}{50 - 5 - 30 - 0.4 - B}       \\ \midrule
$\widehat{\beta}_{BM}$ &    8.98 & 0.00 (0.00) &   8.11 &    210.51 & 0.00 (0.00) &  52.47     \\
$\widehat{\beta}_1$    &   23.89 & 0.00 (0.00) &  12.33 &     66.53 & 0.00 (0.00) &  15.44     \\
$\widehat{\beta}_2$    &    5.19 & 0.00 (0.00) &   0.92 &     30.96 & 0.00 (0.00) &   1.41     \\
$\widehat{\beta}_3$    & 6810.73 & 0.00 (0.00) & 107.92 &   7200.00 & 0.04 (0.04) & 710.26     \\
$\widehat{\beta}_4$    &    8.13 & 0.00 (0.00) &   4.72 &      9.17 & 0.00 (0.00) &   1.11     \\ \midrule
& \multicolumn{3}{c|}{50 - 10 - 15 - 0.5 - A} & \multicolumn{3}{c}{50 - 10 - 15 - 0.5 - B}     \\ \midrule
$\widehat{\beta}_{BM}$ & 7200.00 & 55.94 (12.98) & 6098.87 & 7200.00 & 68.47 (16.08) & 6879.45 \\
$\widehat{\beta}_1$    & 7200.00 & 59.68 (13.04) & 5244.64 & 7200.00 & 96.32 (20.10) & 6809.62 \\
$\widehat{\beta}_2$    & 7200.00 & 32.57 ( 9.93) & 5265.93 & 7200.00 & 44.14 (12.12) & 5843.09 \\
$\widehat{\beta}_3$    & 7200.00 & 63.08 (11.97) & 7008.88 & 7200.00 & 56.30 (13.78) & 6996.63 \\
$\widehat{\beta}_4$    & 7200.00 & 39.74 (10.13) & 5966.24 & 7200.00 & 34.91 ( 8.86) & 6232.40 \\ \midrule
& \multicolumn{3}{c|}{50 - 10 - 30 - 0.5 - A} & \multicolumn{3}{c}{50 - 10 - 30 - 0.5 - B}     \\ \midrule
$\widehat{\beta}_{BM}$ & 6309.04 &  4.07 (2.14) & 4223.32 & 7139.45 & 0.49 (0.29) &  2719.85   \\
$\widehat{\beta}_1$    & 7148.67 &  2.14 (1.21) & 3113.89 & 6813.67 & 0.00 (0.00) &  1797.70   \\
$\widehat{\beta}_2$    & 5315.74 &  0.64 (0.64) & 1722.19 & 6504.35 & 0.35 (0.24) &  1919.68   \\
$\widehat{\beta}_3$    & 7200.00 &  7.22 (2.19) & 5412.28 & 7200.00 & 5.47 (1.50) &  5135.51   \\
$\widehat{\beta}_4$    & 5052.41 &  0.17 (0.14) & 1586.56 & 3055.81 & 0.00 (0.00) &   606.24   \\ \bottomrule
\caption{Synthetic data results for n = 50}
\label{tab:instances50}
\end{longtable}
\end{footnotesize}

\begin{figure}[h]
    \centering
    \includegraphics[width=15cm]{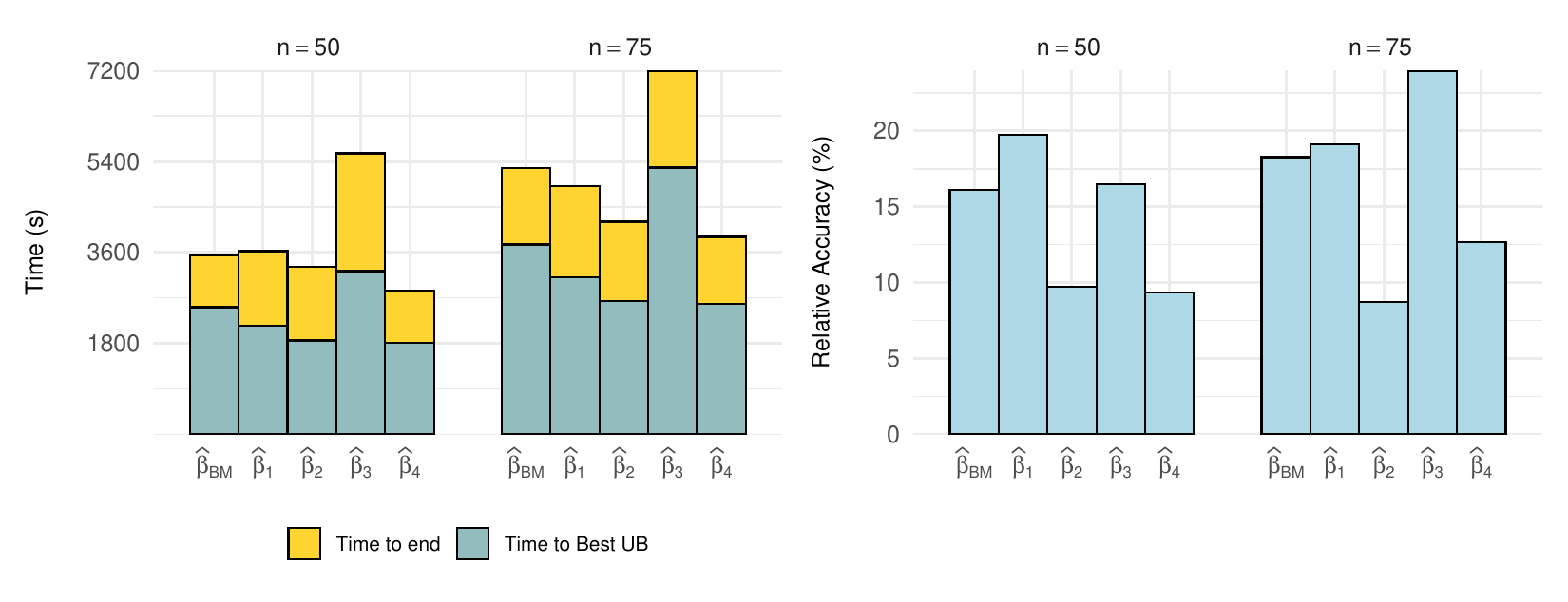}
    \caption{Average times and relative accuracy percentages results for n=25,75}
    \label{fig:syn_1}
\end{figure}

\begin{proof}
Recall that $ r_i(\beta)= y_i-\mathbf{x}_i^{\top} \beta$ and let $e_X(\beta)=(|r_i(\beta)|)_{i=1}^{n}$. 
First of all, we observe that by the triangle inequality it always holds
$$ e_X(\beta)\le e_{\hat X} (\beta) +D \underline{1}^t, $$
being $D=\max_i \{|x_i-\hat x_i|\}$. Let $N=\{1,\ldots,n\}$.
Then, since the $q$-th residual is an isotonic operation 
\begin{align*}
    f_r(X,\beta)=r_{(q)}(e_X(\beta))\le & r_{(q)}(e_{\hat X} (\beta) +D \underline{1}^t)\\
   \le &\min_{\mathcal{I}\subset N: |\mathcal{I}|=q} \max _{j\in \mathcal{I}} \{|e_{\hat x_j}|+D\}\\
  \le  & \max _{j\in \mathcal{I}} \{|e_{\hat x_j}|+D\},\; \forall {\mathcal{I}}\subset N: |{\mathcal{I}}|=q
\end{align*}
Thus, $f_q(\hat X,\beta)\le f_q(X, \beta) + D$ for all $\beta$.
Analogously, $f_q(X,\beta)\le f_q(\hat X, \beta) + D$ for all $\beta$.

Next, the main inequality follows by the following 
    \begin{align*}
    f_q(X,\hat \beta) \le f_q(\hat X,\hat \beta)+D\le f_q(\hat X, \beta^*)+D \le f_q(X, \beta^*)+2D \\
    f_q(X,\hat \beta) \ge f_q(X,\beta^*)\ge f_q(\hat X,\beta^*)-D \ge f_q( X,\beta^*)-2D 
    \end{align*}
    
Joining both inequalities, one gets:
$$ |f_q(X,\beta^*)-f_q(X,\hat \beta)|\le 2 D.$$
\end{proof}

\begin{footnotesize}
\begin{longtable}[h]{c|rrr|rrr}
\toprule
{\texttt{Estimator}} & \texttt{Time} & \texttt{Accuracy} & \texttt{TimeUB} & \texttt{Time} & \texttt{Accuracy} & \texttt{TimeUB} \\ \cmidrule{2-7} 
& \multicolumn{3}{c|}{ 75 - 5 - 25 - 0.4 - A } & \multicolumn{3}{c}{ 75 - 5 - 25 - 0.4 - B }        \\ \midrule 
$\widehat{\beta}_{BM}$ &  1633.62 &     0.00 (0.00) &  724.56 & 4430.51 &  0.00  (0.00) &  1247.38  \\
$\widehat{\beta}_1$    &  2578.67 &     0.00 (0.00) &  252.49 & 3913.49 &  0.00  (0.00) &  1533.05  \\
$\widehat{\beta}_2$    &  1314.71 &     0.00 (0.00) &  451.75 & 2552.69 &  0.00  (0.00) &  1628.06  \\
$\widehat{\beta}_3$    &  7200.00 &     1.55 (0.71) & 3587.03 & 7200.00 &  3.73  (0.90) &  4946.81  \\
$\widehat{\beta}_4$    &   959.56 &     0.00 (0.00) &  407.40 &  958.24 &  0.00  (0.00) &   350.15  \\ \midrule
& \multicolumn{3}{c|}{ 75 - 5 - 40 - 0.4 - A } & \multicolumn{3}{c}{ 75 - 5 - 40 - 0.4 - B }        \\ \midrule
$\widehat{\beta}_{BM}$ &   104.06 &     0.00 (0.00) &   39.29 & 7200.00 &  2.05  (1.14) & 3080.55   \\
$\widehat{\beta}_1$    &   391.33 &     0.00 (0.00) &   25.20 & 3627.74 &  0.00  (0.00) &  538.61   \\
$\widehat{\beta}_2$    &   214.89 &     0.00 (0.00) &    1.73 &  780.33 &  0.00  (0.00) &  254.48   \\
$\widehat{\beta}_3$    &  7200.00 &     0.00 (0.00) & 1396.72 & 7200.00 &  4.40  (1.18) & 5630.87   \\
$\widehat{\beta}_4$    &   166.03 &     0.00 (0.00) &    6.43 &  373.74 &  0.00  (0.00) &   84.35   \\ \midrule
& \multicolumn{3}{c|}{ 75 - 10 - 25 - 0.5 - A } & \multicolumn{3}{c}{ 75 - 10 - 25 - 0.5 - B }      \\ \midrule
$\widehat{\beta}_{BM}$ &  7200.00 &  60.78  (16.85) & 5780.45 & 7200.00 & 52.43  (9.33) & 6485.96   \\
$\widehat{\beta}_1$    &  7200.00 &  95.97  (24.20) & 5634.86 & 7200.00 & 31.94  (7.90) & 5616.51   \\
$\widehat{\beta}_2$    &  7200.00 &  41.26  (14.00) & 4756.77 & 7200.00 & 20.96  (4.77) & 6232.54   \\
$\widehat{\beta}_3$    &  7200.00 & 117.97  (23.62) & 6923.41 & 7200.00 & 31.73  (5.58) & 6888.15   \\
$\widehat{\beta}_4$    &  7200.00 &  54.13  (15.09) & 4446.04 & 7200.00 & 31.59  (7.01) & 6076.58   \\ \midrule
& \multicolumn{3}{c|}{ 75 - 10 - 40 - 0.5 - A } & \multicolumn{3}{c}{ 75 - 10 - 40 - 0.5 - B }      \\ \midrule
$\widehat{\beta}_{BM}$ & 7200.00 &  10.72   (2.06) & 6084.50 & 7200.00 & 20.17 (3.82) & 6606.25     \\
$\widehat{\beta}_1$    & 7200.00 &   5.43   (1.41) & 4606.60 & 7200.00 & 19.45 (3.40) & 6663.73     \\
$\widehat{\beta}_2$    & 7200.00 &   2.57   (1.18) & 3347.07 & 7200.00 &  4.92 (1.54) & 4485.96     \\
$\widehat{\beta}_3$    & 7200.00 &  16.33   (3.17) & 6390.42 & 7200.00 & 16.13 (3.25) & 6490.46     \\
$\widehat{\beta}_4$    & 7200.00 &   2.63   (0.97) & 3645.66 & 7200.00 & 12.85 (2.66) & 5632.45     \\ \bottomrule
\caption{Synthetic data results for n = 75}
\label{tab:instances75}
\end{longtable}
\end{footnotesize}

\begin{footnotesize}
\begin{longtable}[c]{c|rrr|rrr}
\toprule
{\texttt{Estimator}} & \texttt{Time} & \texttt{Accuracy} & \texttt{TimeUB} & \texttt{Time} & \texttt{Accuracy} & \texttt{TimeUB} \\ \cmidrule{2-7} 
& \multicolumn{3}{c|}{ 101 - 5 - 51 - 0.4 - A } & \multicolumn{3}{c}{ 101 - 5 - 51 - 0.4 - B }    \\ \midrule 
$\widehat{\beta}_{BM}$ &   2182.81  &  0.00 (0.00)  &  866.75 & 7200.00 & 1.98 (0.88)  & 3671.47  \\
$\widehat{\beta}_1$    &   3443.78  &  0.00 (0.00)  &  546.49 & 7098.26 & 1.62 (0.66)  & 4538.83  \\
$\widehat{\beta}_2$    &   2116.53  &  0.00 (0.00)  &  366.63 & 7200.00 & 0.16 (0.13)  & 1518.30  \\
$\widehat{\beta}_3$    &   7200.00  &  0.74 (0.51)  & 3334.87 & 7200.00 & 7.84 (1.58)  & 6169.86  \\
$\widehat{\beta}_4$    &   1289.68  &  0.00 (0.00)  &   21.03 & 5454.79 & 0.00 (0.00)  & 1303.71  \\ \midrule
& \multicolumn{3}{c|}{ 101 - 5 - 75 - 0.4 - A } & \multicolumn{3}{c}{ 101 - 5 - 75 - 0.4 - B }    \\ \midrule
$\widehat{\beta}_{BM}$ &   227.67   &  0.00 (0.00)  &  118.18 & 6588.01 & 0.05 (0.05)  & 3428.24  \\
$\widehat{\beta}_1$    &  1353.86   &  0.00 (0.00)  &  185.54 & 2295.74 & 0.00 (0.00)  &  665.38  \\
$\widehat{\beta}_2$    &   477.77   &  0.00 (0.00)  &   29.47 & 4842.17 & 0.00 (0.00)  &  407.00  \\
$\widehat{\beta}_3$    &  7200.00   &  0.69 (0.24)  & 3120.06 & 7200.00 & 1.44 (0.40)  & 4762.61  \\
$\widehat{\beta}_4$    &   273.06   &  0.00 (0.00)  &   17.28 &  467.72 & 0.00 (0.00)  &   83.43  \\ \bottomrule
\caption{Synthetic data results for n = 101}
\label{tab:instances101}\\
\end{longtable}
\end{footnotesize}

\begin{figure}[h]
    \centering
    \includegraphics[width=15cm]{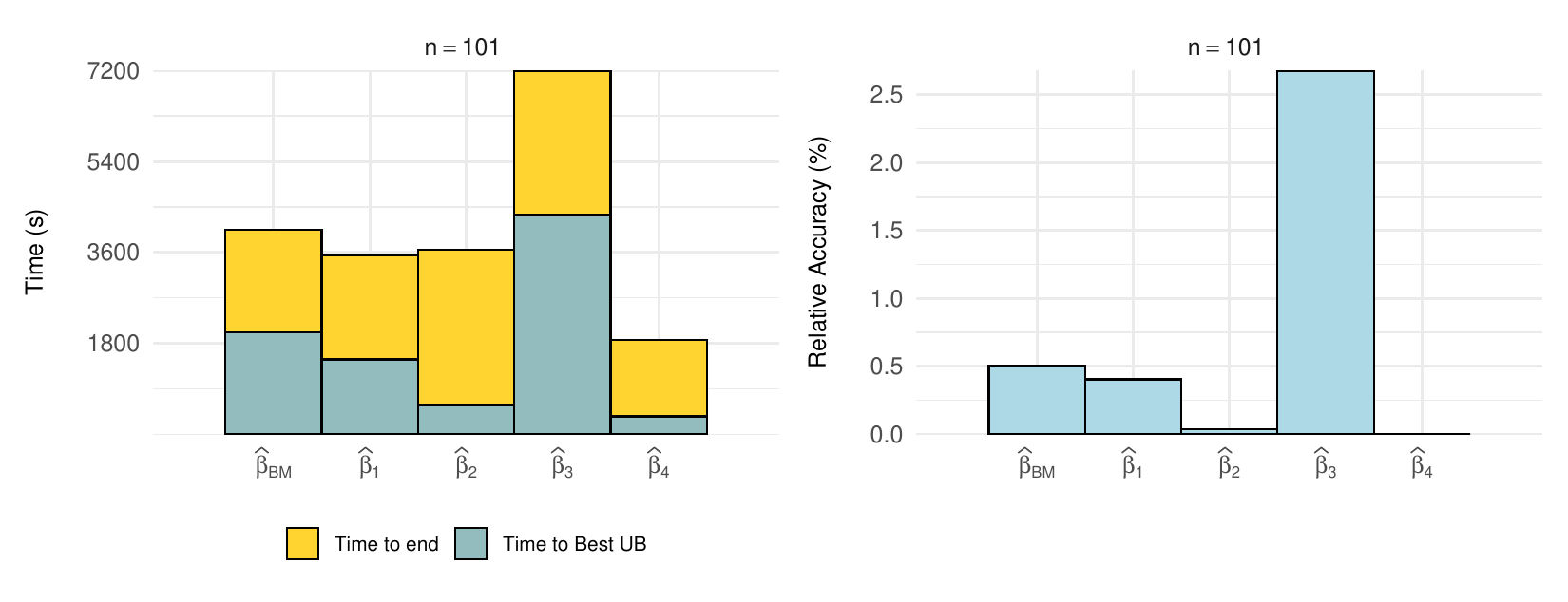}
    \caption{Average times and relative accuracy percentages results for n=101}
    \label{fig:syn_2}
\end{figure}

We note in passing that whenever $\hat X \rightarrow X$ the parameter $D\rightarrow 0$ and the approximation converges to the actual optimal solution.
Therefore, our approach works as follows. 
The main idea is to aggregate our dataset into a tractable number of points or clusters for the exact formulations previously presented, that is, to reduce the datasets to a size where the formulations can provide a solution in an admissible time limit.
The clusters formed must be as homogeneous as possible, i.e., the number of original data points in each cluster must be similar. 

Given the above conditions and the result in Theorem \ref{th:scale}, an approximate LQS estimator can be obtained by solving the aggregated dataset by means of an adjusted version of the above formulations where the weight of each cluster is included.
For example, let $k$ be the number of clusters considered, $N_i$ the number of data points in cluster $i$, and $C_i$ represents the indexes belonging to cluster $i$,
then formulation (\ref{form:SL_1_bis_lin}) can be rewritten modifying constraints (\ref{cons:SL_2_4_bis_lin}) and (\ref{cons:SL_2_5_bis_lin}) to take into account the weight of each cluster as follows,
\begingroup
\allowdisplaybreaks
\begin{subequations}
\label{form:SL_1_bis_lin}
\begin{align}
    & \sum_{i=1}^{k} N_i z_i \leq n-q,    &                                   \label{cons:SL_2_4_bis_mod} \\
    & s_j \leq \omega + M z_i,            & i \in \{1, \ldots, k\}, j \in C_i \label{cons:SL_2_5_bis_mod} 
\end{align}
\end{subequations}
\endgroup

Figure \ref{fig:SCALE} depicts the performance of our approach, namely \texttt{SCALE}, comparing the solutions with those obtained by the LQS algorithm from the \texttt{R}-package \texttt{MASS} (\citealp{MASS}).
Following \cite{BertsimasMazumder2014} experimentation set-up for large instances,
we test 20 instances of the type $(2001,10,1201,0.4,B)$, $(5001,10,3001,0.4,B)$ and $(10001,20,6001,0.4,B)$. 
We aggregate each of the original datasets to 100 representative data points or clusters, letting our approach run for a time limit of 7200 seconds.
It can be seen how our approach, \texttt{SCALE}, provides better solutions than those computed using the \texttt{MASS} package, especially, the difference is more remarkable the larger the instance size solved. 
\begin{figure}[h]
    \centering
    \includegraphics[width=15cm]{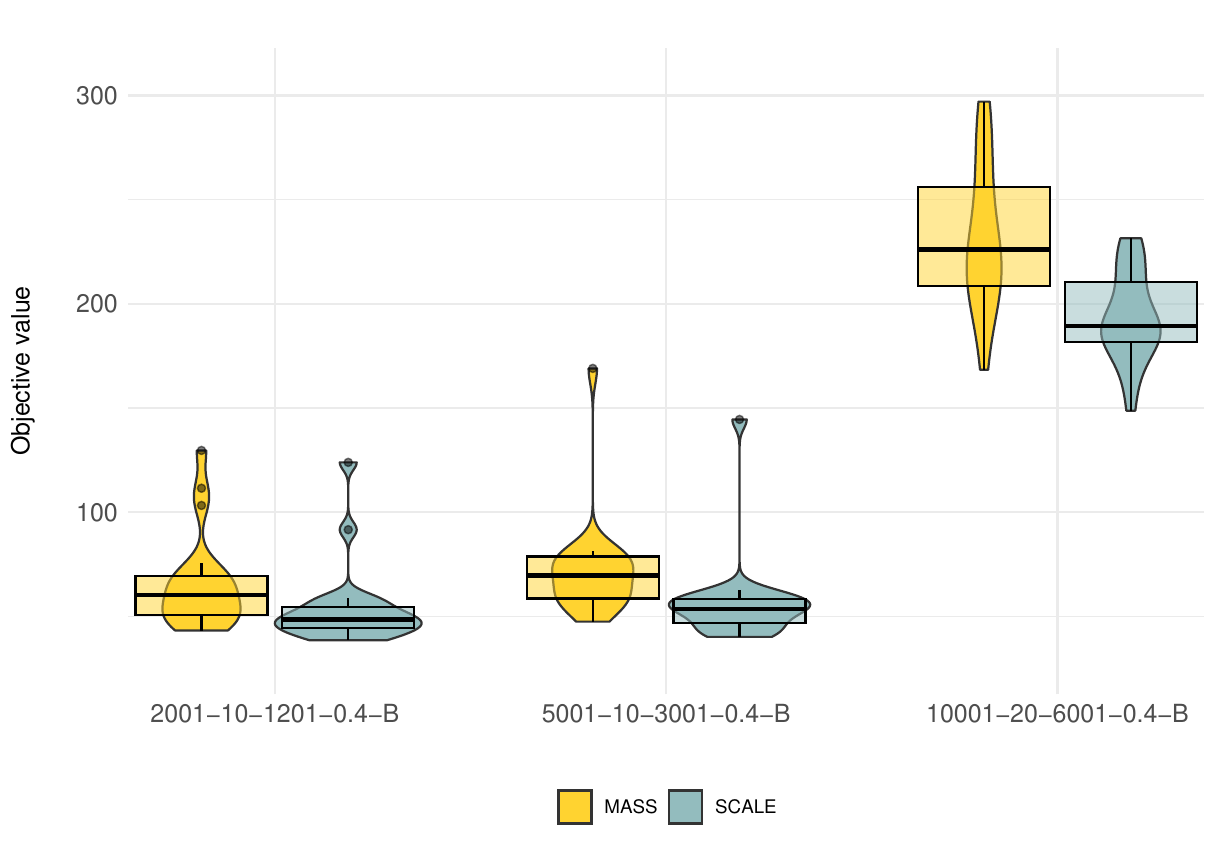}
    \caption{Comparison between LQS solutions obtained by the \texttt{R}-package \texttt{MASS} and our approach, namely \texttt{SCALE}.}
    \label{fig:SCALE}
\end{figure}

\section{Conclusions}

In this paper, we propose algorithms for LQS problems
based on optimization with ordering and single-and-bilevel  mathematical programming. We compare several formulations based on different rationale showing their usefulness.
\begin{enumerate}
    \item Our exact formulations solve to optimality the problem of finding the quantile regression coefficients and the minimum $r_{(q)}$ absolute residual of any data set with size up to 100 in less than 2 hours. For larger sizes, our methods provide rather good upper bounds in reasonable time. Actually, we present an aggregation approximate algorithm that outperforms state-of-the art methods for quantile regression (\texttt{R}-package \texttt{MASS} \citealp{MASS}) for data in the range of thousands ($n=10000$).
    \item Our formulations outperform the only available exact method for quantile regression \cite{BertsimasMazumder2014}, for those problem sizes where optimality can be certified. 
    \item Our estimators inherit the same properties that those obtained by \cite{BertsimasMazumder2014} in terms of validity for data point  not necessarily in general position, high breakdown point and capability to incorporate conic and polyhedral constraints on the regression coefficients. 
    \item The approaches presented in this paper are rahter flexible to accommodate to many other forms of robust regression models. Indeed, they easily extend to more sophisticated forms of regression which are based on ordered weighted averages of absolute residuals. These regression coefficients can be optimally obtained solving their corresponding exact formulations. The interested reader is referred to \cite{Nickel2005},\cite{Labbe2017} or \cite{Marin2020}, among other sources.
\end{enumerate}

\section*{\small Acknowledgements}

\begin{small}
The authors acknowledge financial support by the reasearch project PID2020-114594GB-C21 funded by the Spanish Ministerio de Ciencia e Innovación and Agencia Estatal de Investigación (MCIN/AEI/10.13039/501100011033).     
\end{small}

\bibliographystyle{elsarticle-harv} 
\bibliography{refs}

\end{document}